\documentclass[%
 reprint,
nofootinbib,
 amsmath,amssymb,
 aps,
]{revtex4-2}
\pdfoutput=1
\usepackage{physics}
\usepackage{tikz}
\usepackage{xcolor}

\usepackage[T1]{fontenc}
\usepackage{dsfont,hyperref}
\usepackage{amsthm,amsfonts,pifont}

\newtheorem{theorem}{Theorem}

\newtheorem{definition}{Definition}
\newtheorem{example}{Example}
\newtheorem{lemma}{Lemma}
\newtheorem{remark}{Remark}

\usepackage{thm-restate}


\begin{document}

\title{Connecting XOR and XOR* games}

\author{
Lorenzo Catani$^{1}$}
\email{lorenzo.catani@tu-berlin.de}
\author{Ricardo Faleiro$^{2}$} \email{ricardofaleiro@tecnico.ulisboa.pt}
\author{Pierre-Emmanuel Emeriau$^{3}$}
\author{Shane Mansfield$^3$}
\author{Anna Pappa$^1$}

\affiliation{$^{1}$Electrical Engineering and Computer Science Department,
Technische Universitat Berlin, 10587 Berlin, Germany\\
$^{2}$Instituto de Telecomunicações, Avenida Rovisco Pais 1,049-001, Lisboa, Portugal\\
$^{3}$Quandela,
 7 Rue Léonard de Vinci,
 91300 Massy, France\\}

\begin{abstract}
 In this work we focus on two classes of games: XOR nonlocal games and XOR* sequential games with monopartite resources. XOR games have been widely studied in the literature of nonlocal games, and we introduce XOR* games as their natural counterpart within the class of games where a resource system is subjected to a sequence of controlled operations and a final measurement.  Examples of XOR* games are $2\rightarrow 1$ quantum random access codes (QRAC) and the CHSH* game introduced by Henaut et al.~in [PRA \textbf{98},060302(2018)].
We prove, using the diagrammatic language of process theories, that under certain assumptions these two classes of games can be related via an explicit theorem that connects their optimal strategies, and so their classical (Bell) and quantum (Tsirelson) bounds. We also show that two of such assumptions -- the reversibility of transformations and the bi-dimensionality of the resource system in the XOR* games -- are strictly necessary for the theorem to hold by providing explicit counterexamples.
We conclude with several examples of pairs of XOR/XOR* games and by discussing in detail the possible resources that power the quantum computational advantages in XOR* games. 
\end{abstract}
\maketitle

\section{Introduction}

Computational tasks for which quantum strategies outperform classical ones have long been and are still an important focus of study.
A well-known example is the CHSH game, a way of recasting the Clauser-Horne-Shimony-Holt (CHSH) formulation \cite{CHSH69} of Bell's celebrated theorem \cite{Bell64} into a game for which quantum strategies can provide an advantage over classical ones. It is a cooperative game with two players, Alice and Bob, who agree on a strategy before the game begins but are separated and unable to communicate with each other once the game is in progress.
A referee poses one of two binary questions to each player, and the players win if the sum of their answers equals the product of the questions (arithmetic is modulo $2$). 
The bound on the performances of classical strategies is known as the Bell bound, while in the quantum case the maximum winning probability is known as the Tsirelson bound \cite{Cirelson1980}.
The CHSH game is of great importance due to the dependence of its optimal success probability on the underlying theory describing the implemented strategies, which gives us a tool to experimentally distinguish between distinct physical theories.
Scenarios that involve violations of the Bell bound are said to manifest ``nonlocality'' (meaning that the statistics they show are inconsistent with a description in terms of a local hidden variable model \cite{Bell64}), and  the Tsirelson bound can be interpreted as a quantifier of how nonlocal quantum theory is within the broader landscape of all possible no-signaling theories.
Nonlocality is also known to be a useful resource for applications in quantum technology, \textit{e.g.},\ in device-independent cryptography \cite{MY04,P10, VV14} and it can be considered to be a special form of the more general notion of contextuality  \cite{Fine,abramskysheaf},
known to be a necessary ingredient for quantum advantage and speedups in a variety of settings \cite{Spekkens2009,Abbott2012,Raussendorf2013,Zhang2013,Howard2014,Guhne2014,Delfosse2015,Chailloux2016,Raussendorf2017,Vega2017,Tavakoli2017,Oestereich2017,CataniBrowne2018,Mansfield2018, Schmid2018,Frembs2018,Raussendorf2019,Saha2019,SahaAnubhav2019,Bharti2019,LostaglioSenno2020,lostaglio2020certifying,Yadavalli2020,Um2020,Emeriau2020,Flatt2021,Roch2021}.
 
Inspired by the CHSH game, broad classes of nonlocal cooperative games have been proposed. 
These involve, in their simplest formulation, two spatially separated parties performing operations on some shared resource system, while being forbidden from communicating. 
Moreover, there also exist games cast in different setups that show the same classical and quantum bounds as the CHSH game  \cite{Ambainis1999,Spekkens2009,Fritz2010,Henaut2018}. These do not involve two spatially separated parties performing local operations on the corresponding systems, but only an input resource system subjected to an ordered sequence of transformations or measurements. We refer to this kind of games as \textit{monopartite sequential games}.
Obviously, in these cases, the Tsirelson bound cannot be read as a quantifier of the nonlocality of quantum theory and it becomes natural to wonder about the nature of the source of the quantum-over-classical advantage in such games, and whether there is a deep connection between these protocols and nonlocal games.

In this work we address these questions for particular classes of nonlocal and monopartite sequential games that include various known games studied in the literature. More precisely, we focus on a  subclass of nonlocal games -- XOR games \cite{cleve_consequences_2010} -- where the goal is to satisfy the condition that a (possibly nonlinear) function of the inputs equals the XOR (or sum modulo $2$) of the output bits.
Via Theorem \ref{theorem}, we relate a subset of these to a class of monopartite sequential games -- the XOR* games --
where the goal is to obtain, as single output, the (possibly nonlinear) function of the inputs.
The already mentioned CHSH game is the most famous example of an XOR game, but other examples within this class exist in the literature, such as the EAOS game \cite{brukner_entanglement-assisted_2006, faleiro_quantum_2019}  where  Alice and Bob have to compute a Kronecker-delta function of the input trits.  Regarding XOR* games, examples are  the  $2\rightarrow 1$ quantum random access codes (QRACs)\cite{Ambainis1999}, $2\rightarrow 1$ parity oblivious multiplexing (POM) \cite{Spekkens2009} and CHSH*~\cite{Henaut2018} games.
By virtue of our categorisation, the CHSH* game emerges as the natural XOR* version of the CHSH game and the QRACs read as the XOR* versions of an XOR game which is different from the CHSH game. This may sound surprising, as the $2\rightarrow 1$ QRAC is usually associated to the CHSH game in that it shares the same values for the Bell and Tsirelson bounds \cite{galvao}.  

Crucial in our work is Theorem \ref{theorem}: it allows us to connect the optimal quantum strategies for the classes of XOR and XOR* games. The theorem involves certain assumptions: bounding the cardinalities of the inputs (as a consequence of the use of a lemma due to Cleve et al. \cite{cleve_consequences_2010}) and restricting the XOR* games to those involving only two-dimensional (2D) resource systems. These constraints do not prevent the result to apply to a number of known XOR and XOR* games present in the literature. We describe the main ones in subsection \ref{Section_Examples} and, as a consequence of our framework, we also define the XOR versions of known XOR* games such as the $2\rightarrow 1$ QRAC \cite{Ambainis1999}, the binary output Torpedo game \cite{Emeriau2020}, and the Gallego et al. \cite{Gallego2010} dimensional witness.    
Theorem \ref{theorem} also assumes that the strategies of the XOR* games involve only reversible gates. We show how this assumption is strictly necessary for the theorem to hold by providing an example of XOR* game where the use of (irreversible) reset gates can create a quantum-classical performance gap that does not arise if considering only reversible gates. We refer to this feature as \textit{reset-induced gap activation}.
The mapping described in Theorem \ref{theorem} can be used to show that preparation contextuality \cite{spekkens_contextuality_2005} is a resource for outperforming classical strategies in certain XOR* games (see \ref{appendixB}). The reason is that the established proofs of nonlocality as a resource for outperforming classical strategies in XOR nonlocal games can be mapped to proofs of preparation contextuality in monopartite sequential games (interpreted as prepare and measure scenarios) via the mapping of the theorem. However, this argument does not apply to all the XOR* games that mimic the quantum over classical computational advantages of the XOR games and, crucially, does not apply to what we define as \textit{dual} XOR* games of the XOR games under consideration. For example, it does not apply to the CHSH* game. We provide a thorough analysis of this fact and of why other typical candidates employed to explain the origin of the quantum advantage do not work in the case of XOR* games. We conclude by outlining a proposal to develop a new notion of nonclassicality in the spirit of generalized contextuality that applies to these games.

The remainder of the paper is structured as follows. In section \ref{sec:games} we describe nonlocal games and monopartite sequential games, focusing specifically on XOR and XOR* games. In section \ref{Relating} we relate XOR and XOR* games via Theorem \ref{theorem}, which establishes a mapping between these two classes. In particular, we prove Theorem \ref{theorem} using a diagrammatic approach according to the formalism of process theories \cite{coecke2018picturing}, and we give various examples of games within these classes. In section \ref{Section_advantage} we discuss why the typical features employed to explain the quantum computational advantage in information processing tasks do not apply in the case of XOR* games and we advance a proposal to overcome this problem. 
In section \ref{conclusions} we discuss the significance of the results and we outline future challenges. Finally, we relegate all proofs of the stated lemmas to \ref{appendixA}.

\section{XOR nonlocal games and XOR*  sequential games}
\label{sec:games}

A  \textit{cooperative game} consists of a protocol involving some number of agents (players) that cooperate to perform a certain task. In order to win the game the players have to come up with a strategy to produce outputs which successfully compute a task function of the inputs provided by a referee. We formulate the winning condition as a \textit{predicate} that equates the task function with a function of the outputs.
A game is also specified by the \textit{setup} it is played in. The setup is defined by a list of elements that compose the protocol, \textit{e.g.}, the kind and number of resource systems involved, the number and cardinalities of inputs and outputs, and the number and sequential order of operations like transformations or measurements (controlled by the players). 
The list of actions that each player has to adopt given particular values of the inputs defines a \textit{strategy} for the game. Each strategy has to obey possible \textit{restrictions} that, given the specification of the setup, the protocol posits. 
The typical example of such a restriction, in the context of nonlocal games where the players are spatially separated and cannot communicate, is the principle of no-signaling. Notice that these restrictions are theory-independent, in the sense that they must be obeyed independently of the precise operational theory (\textit{e.g.}, classical or quantum theory) that the strategies of the players are formulated in.\footnote{The term ``theory-independent'' is standard in the literature of device-independent protocols (see for example \cite{Tavakoli2022informationally}). With this it is meant that the restrictions have to be obeyed by all the operational theories that are under comparison (\textit{e.g.}, classical, quantum and GPTs).}

Finally, a game, in general, can also be specified by extra constraints that do not emerge from the setup, but are \textit{artificial constraints} that are imposed on the set of possible strategies. Examples of these are the parity obliviousness in the parity oblivious multiplexing game \cite{Spekkens2009}, the dimensional constraint in quantum random access codes \cite{Wiesner1983,Ambainis1999} and, more recently, the information restriction on the communication channel in the protocols considered in \cite{Tavakoli2022informationally}. 
The purpose of introducing these artificial constraints is to allow for an interesting categorisation of the strategies, \textit{i.e.}, to create a performance gap between distinct operational theories. In other words, thanks to these constraints, it is possible to separate the performances of the strategies depending on the operational theory they are formulated in. In particular, they allow to observe performance gaps between classical and quantum strategies. Given some game \(G\), such a gap is characterised by a positive \textit{quantum-over-classical advantage},\footnote{In the following, we will refer to the quantum-over-classical advantage also as ``quantum-classical gap'' or ``performance gap''.} $\Omega(G) := \omega_q(G) - \omega_c(G) > 0$,  where $\omega_q(G)$ and $\omega_c(G)$ denote the highest probabilities to win $G$ by means of quantum and a classical strategy, respectively called the \textit{quantum value} and \textit{classical} \textit{value} of the game. Notice that when we say ``classical'' and ``quantum'' strategies we mean strategies that employ systems that are prepared, transformed and measured according to the rules of the respective theory. Furthermore, when referring  to the quantum-over-classical advantage and/or the classical and quantum values for games bearing  artificial constraints, we will explicitly write the type of constraint in the notation. For instance,  XOR* games restricted to two-dimensional resources and reversible operations, which we will extensively deal with later on, have their quantum-over-classical advantage written as  $\Omega(\textup{XOR}^*|^{Rev}_{2D}) := \omega_q(\textup{XOR}^*|^{Rev}_{2D}) - \omega_c(\textup{XOR}^*|^{Rev}_{2D}).$ 

In summary, a game is formally defined by a specification of a setup, predicate, restrictions and possible artificial constraints. In this section we are interested in two broad classes of games: \textit{nonlocal games} and \textit{monopartite sequential games}, which we now define in the case of two players (figure \ref{fig:games_diagrams}).

\begin{definition}[\textbf{Two-player Nonlocal game}] 
\label{nonlocalgame}
\end{definition}
\begin{itemize}
    \item \textit{Setup}:  the two players,  Alice and Bob, receive inputs \(s\in\mathcal{S}\) and \(t\in\mathcal{T}\), respectively, sampled from a known probability distribution $p(s,t)$, where $\mathcal{S}$ and $\mathcal{T}$ denote tuples of dits of finite dimension. Before the game starts, Alice and Bob can agree on a strategy, while they cannot communicate after the game starts. Such a strategy could involve using a shared resource system on which they perform local operations (transformations and measurements) controlled by the values of their respective inputs. Each player has to provide an output, denoted with  \(a\in\mathcal{A}\) for Alice and \(b\in\mathcal{B}\) for Bob, where $\mathcal{A}$ and $\mathcal{B}$ denote tuples of dits of finite dimension.    
 
    \item \textit{Restriction}: the restriction of the setup is the no-signalling condition, that prevents Alice and Bob -- in spatially separated locations -- from communicating during the game, \begin{equation}
    p(a|st) =  p(a|s); \;
   p(b|st) =  p(b|t).
 \end{equation}
    \item \textit{Predicate}:
    \begin{equation}
 W_{g|f}(a, b| s,t) =\left\{
\begin{array}{ll}
      1 \hspace{.1cm}\textup{if }\hspace{.1cm} f(s,t)=g(a,b)\\
      0 \hspace{.1cm}\textup{otherwise}.\\
\end{array}
\right.    
\end{equation}{}
\end{itemize}
The function $f(s,t)$ is what we called the ``task function'' at the beginning of the section.\\


\begin{definition}[\textbf{Two-player Monopartite Sequential game}]
\label{sequentialgame}
\end{definition}

\begin{itemize}
    \item \textit{Setup}: the two players,   Alice and Bob, receive inputs \(s\in\mathcal{S}\) and \(t\in\mathcal{T}\), respectively, sampled from a known probability distribution $p(s,t)$, where $\mathcal{S}$ and $\mathcal{T}$ denote tuples of dits of finite dimension.  They are ordered in sequence, say Alice before Bob ( denoted as ``$A<B$''). Alice is given a single resource system on which she applies one operation controlled by the value of the input $s$ she receives. The system is forwarded to Bob on which he also applies one operation controlled by the value of the input $t$ he receives. At the end the resource system is subjected to a fixed measurement. In this way it is subjected to a ordered sequence of operations (transformations or measurements) controlled by the inputs, and the strategy the players agree on consists of choosing the resource system, the controlled operations and the measurement.  The outcome $m\in\mathcal{M}$ of the measurement is the output of the game, where $\mathcal{M}$ denotes a tuple of dits of finite dimension. 

\item \textit{Restriction}: analogously to the principle of  no-signaling in nonlocal games, the physical restriction on the game is here that the communication between the players cannot violate the causal structure imposed by the setup. This means that Bob cannot signal backwards to Alice. This is called the principle of weak-causality \cite{Pegg}  and can be written as follows, 
\begin{equation}
    \label{No-signalling-from-the-future}
    p(a|s,t) =  p(a|s),
\end{equation}
where $a$ is the variable associated with Alice's choice of operation.
\item \textit{Artificial constraints}:  They might be of various types. For example, reversibility of operations, dimensional constraints on the resource, parity obliviousness, informational restriction of the channels, etc.

    \item \textit{Predicate}: \begin{equation}
 W_f(m| s,t) =\left\{
\begin{array}{ll}
      1 \hspace{.1cm}\textup{if }\hspace{.1cm} f(s,t) =m \\
      0 \hspace{.1cm}\textup{otherwise}.\\
\end{array}
\right.    
\end{equation}{}
\end{itemize}

Again, the function $f(s,t)$ is what we called the ``task function'' at the beginning of the section. 
  \begin{center}
  \begin{figure}[h!]
    \centering
      \includegraphics[scale=0.35]{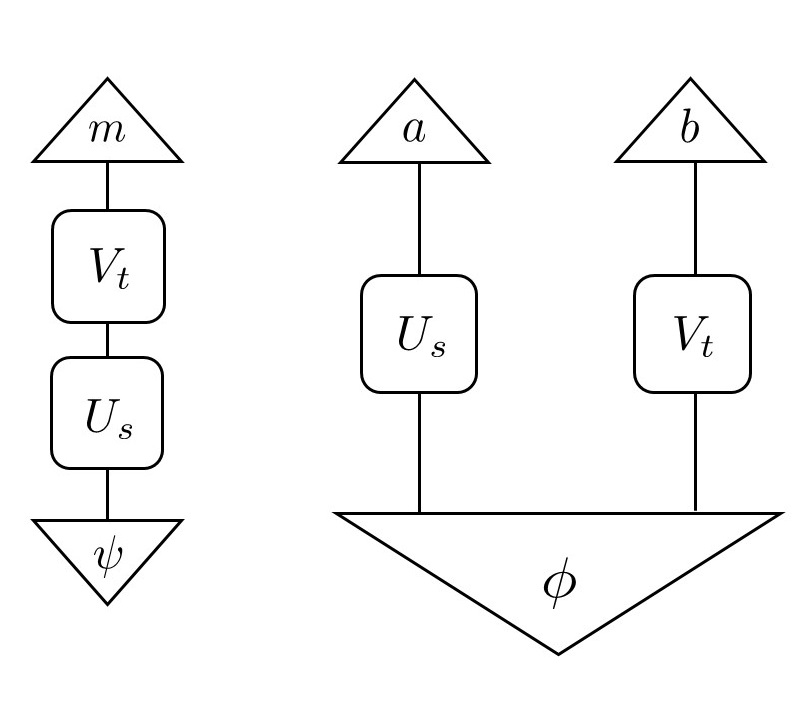}
     \caption{Setup of monopartite sequential games (left) and nonlocal games (right). In the former case the binary outcome $m$ is obtained from a unique measurement after two controlled operations $U_s$ and $V_t$ performed on a monopartite resource system whose state is denoted with $\psi$. In the latter case the outcomes $a,b$ are obtained via two spatially separated measurements, which are implemented after two controlled operations $U_s$ and $V_t$ have been performed on a bipartite resource system denoted with $\phi$. In addition to illustrating the setups of the games just described, the figure above can be read as a process diagram composed of systems (wires) and processes (boxes), where \textit{states} are special processes with trivial inputs (upside-down triangles), and \textit{effects} special processes with trivial outputs (triangles).  We will use the language of process theories later on when proving our main result.}
     \label{fig:games_diagrams}
\end{figure}

 \end{center}

A widely known class of two-player nonlocal games are the XOR games. First introduced in \cite{cleve_consequences_2010}, they have been studied in the context of multi-interactive proof systems with entanglement.
XOR games form a subset within the class of \textit{binary} nonlocal games, which are nonlocal games where the players output bits (even though the inputs do not necessarily need to be bits). An XOR game further restricts such a class by requiring, as the winning condition, that the particular task function must equate the parity (or XOR) of the output bits. The CHSH game is one example of an XOR game, where the winning condition states that the parity of the output bits must equal the product of the input bits. Other XOR games might consider different input sets, and distinct functions over such inputs; for instance, in the EAOS game \cite{faleiro_quantum_2019}  the function is a Kronecker-delta of the input trits.

Regarding monopartite sequential games, the precise subset that we will focus on, are XOR* games, where the winning condition is that the particular task function must equate the output bit. Our motivation for focusing on these specific games is to identify the class of monopartite sequential games that reproduce the same classical, quantum bounds present in XOR games, in analogy to the CHSH* and CHSH example first studied in \cite{Henaut2018}.\footnote{We refer the reader to \cite{Henaut2018} for the details of how to obtain the classical and quantum bounds and for the description of the optimal strategies in the CHSH* game.} This fact also motivates the name ``XOR*''. Following this approach we will connect the two classes of games via a theorem that relates their strategies and we will clarify the relationship between various examples of games that have been extensively studied and whose performances manifest the same bounds. We now characterise XOR and XOR* games in agreement with the previous definitions of nonlocal games (definition \ref{nonlocalgame}) and monopartite sequential games (definition \ref{sequentialgame}).\\

\begin{definition}[\textbf{XOR games}]
\label{XOR games}
\end{definition}
\begin{itemize}
    \item \textit{Setup}: Two-player nonlocal game setup, where the outputs are one bit for each player, \textit{i.e.}, $\mathcal{A}=\mathbb{Z}_2$ and $\mathcal{B}=\mathbb{Z}_2$. 
    \item \textit{Restriction}: No-signalling condition,  \begin{equation}
    p(a|st) =  p(a|s); \;
   p(b|st) =  p(b|t).
   \end{equation}
    \item \textit{Predicate}: \begin{equation}
 W_f(a\oplus b| s,t) =\left\{
\begin{array}{ll}
      1 \hspace{.1cm}\textup{if }\hspace{.1cm}f(s,t)=a\oplus b\\
      0 \hspace{.1cm}\textup{otherwise}.\\
\end{array}
\right.    
\end{equation}{}
\end{itemize}

\begin{definition}[\textbf{XOR* games}]
\label{XOR* games}
\end{definition}
\begin{itemize}
    \item \textit{Setup}: Two-player monopartite sequential game setup, where the output is a bit, \textit{i.e.}, $\mathcal{M}=\mathbb{Z}_2$. .
    
\item \textit{Restriction}: Weak-causality,
        \begin{equation}
    p(a|s,t) =  p(a|s);
\end{equation}
where $a$ is a classical variable labeling the state after Alice's operation.
\item \textit{Artificial constraints}: Various types. For example, reversible operations, dimensional constraints on the resource, parity obliviousness, etc.
    \item \textit{Predicate}: \begin{equation}
 W_f(m| s,t) =\left\{
\begin{array}{ll}
      1, \hspace{.1cm}\textup{if }\hspace{.1cm}f(s,t)=m\\
      0, \hspace{.1cm}\textup{otherwise}.\\
\end{array}
\right.    
\end{equation}{}
\end{itemize}
  
Notice how XOR* games are nothing but \textit{binary} monopartite sequential games.

\section{Relating XOR and XOR* games}
\label{Relating}

In this section we connect XOR and XOR* games. In order to do so, we impose constraints on the cardinality of inputs, as well as on the dimensionality of the resource system and the type of transformations in the XOR* games. In the following, with the aid of known results about XOR games and extending a proof contained in \cite{Henaut2018}, we first prove our main theorem. We then show that both assumptions of reversibility  of transformations and bi-dimensionality of the resource system in XOR* games are crucial for the theorem to hold by providing counter examples where such assumptions do not hold and the mapping of the theorem breaks down. 

\subsection{Main result}

We start by focusing on a subset of XOR games, which we call \textit{ebit} XOR games. They are XOR games for which the Tsirelson bound is obtained with strategies using one Bell state, that is, exploiting what is called an \textit{ebit}---a single \textit{bit} of shared bipartite entanglement \cite{wilde_2017}. We now state two lemmas concerning ebit XOR games that will be useful to prove our main theorem, Theorem \ref{theorem}.

\begin{lemma}
\label{lemma_cleve}
   Any XOR game where \(\textup{min}(|\mathcal{S}|,|\mathcal{T}|)\leq 4\) is an ebit-XOR game.
\end{lemma}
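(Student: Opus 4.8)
The plan is to reduce the statement to Tsirelson's vectorial characterisation of the quantum value of XOR games and then to exhibit an explicit one-ebit strategy that saturates it. Encoding the game by the real cost matrix \(\hat C=(\hat c_{st})\) built from \(p(s,t)\) and the predicate, the quantum value of an XOR game is governed by the bias \(\beta_q=\max\sum_{s,t}\hat c_{st}\,\langle u_s,v_t\rangle\), where the maximisation ranges over families of real \emph{unit} vectors \(\{u_s\}_{s\in\mathcal S}\) and \(\{v_t\}_{t\in\mathcal T}\), and any such assignment is attainable by \(\pm1\) observables on a maximally entangled state (Tsirelson's theorem \cite{Cirelson1980,cleve_consequences_2010}). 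The whole task then splits into two pieces: bound the dimension in which the optimal vectors may be taken to live, and realise vectors of that dimension on a single Bell pair.

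First I would bound the dimension of the optimal vectors. Assume without loss of generality that \(|\mathcal T|=\min(|\mathcal S|,|\mathcal T|)\), and let \(W=\mathrm{span}\{v_t\}_{t\in\mathcal T}\), so that \(\dim W\le|\mathcal T|\). For fixed \(\{v_t\}\) the objective is linear in each \(u_s\), hence the optimal \(u_s\) is the unit vector aligned with \(\sum_t\hat c_{st}v_t\in W\); consequently every optimal vector can be taken inside \(W\), and all vectors live in \(\mathbb R^{\min(|\mathcal S|,|\mathcal T|)}\).

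Next I would realise unit vectors of this space as qubit observables on one ebit. Using the three Pauli matrices \(\sigma_x,\sigma_y,\sigma_z\) as pairwise anticommuting involutions, set \(A_s=\vec u_s\cdot\vec\sigma\) and \(B_t=\vec v_t\cdot\vec\sigma^{\,T}\) (component-wise transpose). Each squares to the identity because \(\|u_s\|=\|v_t\|=1\), so these are legitimate \(\pm1\) observables, and the standard Bell-state identity gives \(\bra{\Phi^+}A_s\otimes B_t\ket{\Phi^+}=\tfrac12\operatorname{tr}(A_sB_t^{T})=\langle u_s,v_t\rangle\). Thus a single Bell state reproduces the optimal correlations, and hence the Tsirelson bound, whenever the optimal vectors fit into \(\mathbb R^{3}\); this already settles the cases \(\min(|\mathcal S|,|\mathcal T|)\le 3\).

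The delicate point — and the step I expect to be the main obstacle — is the dimension count at the extremal case. A single qubit supplies only three anticommuting involutions, so its observables span only \(\mathbb R^3\), whereas the reduction above a priori confines the optimal vectors merely to \(\mathbb R^{\min(|\mathcal S|,|\mathcal T|)}\). Covering \(\min(|\mathcal S|,|\mathcal T|)=4\) therefore requires either showing that the four optimal vectors on the smaller side are always linearly dependent (so their Gram matrix has rank at most three and they collapse into \(\mathbb R^3\)), or invoking directly the optimal Tsirelson construction and entanglement-dimension bound of Cleve et al.~\cite{cleve_consequences_2010} to certify that one Bell pair still suffices. I would rely on the latter, since an ad hoc reduction of four vectors to three dimensions is precisely where subtleties are most likely to hide.
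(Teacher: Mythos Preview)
Your proposal is correct and, at bottom, coincides with the paper's argument. The paper's entire proof is a single citation: Theorem~10 of Cleve et al.~\cite{cleve_consequences_2010} asserts that \(\lceil m/2\rceil\) qubits in a maximally entangled state suffice for an optimal XOR-game strategy when \(m=\min(|\mathcal S|,|\mathcal T|)\), and for \(m\le 4\) this is a Bell pair.

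What you have done is unpack the machinery behind that theorem---Tsirelson's vectorial form of the bias, the reduction of the optimal vectors to \(\mathbb R^{\min(|\mathcal S|,|\mathcal T|)}\), and the realisation of three-dimensional unit vectors via the Paulis on one qubit---and then, at the boundary case \(m=4\), re-invoke the very result you have been reconstructing. So the two arguments are not genuinely different: yours is the Cleve et al.\ argument written out for \(m\le3\) followed by a black-box appeal to it for \(m=4\), while the paper simply quotes the theorem once and is done. Your caution about the four-vector case is well placed---that is precisely where the supply of anticommuting involutions on a single qubit runs out and the external citation does the real work---but it does not yield a route independent of that citation. What your version buys is a self-contained treatment of the easy cases; what the paper's version buys is brevity.
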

\begin{proof}
    From Theorem 10 of Cleve et al. \cite{cleve_consequences_2010},  we know that \(\lceil{m/2}\rceil\) qubits in a maximally entangled state, for  \(m=min (|\mathcal{S}|,|\mathcal{T}|)\), are sufficient to implement an optimal quantum strategy in an XOR game.  Consequently, a Bell state (a maximally entangled state of two qubits) is sufficient to implement an optimal quantum strategy when  \(min (|\mathcal{S}|,|\mathcal{T}|)\leq 4\).
\end{proof}

\begin{lemma}[V.1,\cite{mansfield2017consequences}]
Given inputs $(s,t)$ for Alice and Bob specifying their respective projective single-qubit measurements, $A_{a|s},B_{b|t}$, applied, respectively, to the first and second qubit of the two-qubit system in the Bell state $\ket{\phi^+}=\frac{1}{\sqrt{2}}\ket{00}+\ket{11}$, the probability \(p(a,b|s,t)\) of  obtaining output bits \((a,b)\)  is such  that
\(
     p(0,0\vert s,t) = p(1,1 \vert s,t) \) and \( p(0,1\vert s,t) = p(1,0 \vert s,t)
\).
 \label{lemma_symmetry}
 \end{lemma}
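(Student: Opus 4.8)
The plan is to compute the joint outcome probability directly and exploit the defining symmetry of the Bell state---namely that each of its single-qubit marginals is maximally mixed. Reading the projective measurements as effects, the quantity of interest is $p(a,b|s,t) = \bra{\phi^+} A_{a|s}\otimes B_{b|t}\ket{\phi^+}$, where for each input the two projectors satisfy $A_{0|s}+A_{1|s}=\mathds{1}$ and $B_{0|t}+B_{1|t}=\mathds{1}$, and every $A_{a|s}$, $B_{b|t}$ is a rank-one qubit projector, hence of unit trace. First I would record the two marginal distributions. Since the reduced state on either qubit of $\ket{\phi^+}$ is the maximally mixed state $\mathds{1}/2$, tracing out Bob gives $p(a|s,t)=\Tr[A_{a|s}\,\mathds{1}/2]=\tfrac12\Tr[A_{a|s}]=\tfrac12$, and symmetrically $p(b|s,t)=\tfrac12$. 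Thus both marginals are uniform, independently of the chosen measurement axes.

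The claimed symmetry of the joint distribution then follows from these uniform marginals together with normalisation, by pure linear algebra on the four numbers $p(a,b|s,t)$. Concretely, Alice's marginal gives $p(0,0|s,t)+p(0,1|s,t)=\tfrac12$ while Bob's marginal gives $p(0,0|s,t)+p(1,0|s,t)=\tfrac12$; subtracting these yields $p(0,1|s,t)=p(1,0|s,t)$. Feeding this back, the marginal constraint $p(0,1|s,t)+p(1,1|s,t)=\tfrac12$ combined with $p(0,0|s,t)+p(0,1|s,t)=\tfrac12$ gives $p(0,0|s,t)=p(1,1|s,t)$, which are exactly the two asserted identities.

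As an independent cross-check I would also carry out the direct Bloch-vector computation: parameterising any rank-one qubit projector as $A_{a|s}=\tfrac12(\mathds{1}+(-1)^a\,\hat n_s\cdot\vec\sigma)$ and $B_{b|t}=\tfrac12(\mathds{1}+(-1)^b\,\hat m_t\cdot\vec\sigma)$, and using the ``ricochet'' identity $(M\otimes\mathds{1})\ket{\phi^+}=(\mathds{1}\otimes M^{T})\ket{\phi^+}$ to rewrite $p(a,b|s,t)=\tfrac12\Tr[A_{a|s}B_{b|t}^{T}]$. Expanding and discarding all traceless Pauli terms leaves $p(a,b|s,t)=\tfrac14\bigl(1+(-1)^{a+b}\,c_{s,t}\bigr)$ with $c_{s,t}=\tfrac12\Tr[(\hat n_s\cdot\vec\sigma)(\hat m_t\cdot\vec\sigma)^{T}]$, making manifest that $p$ depends on the outputs only through the parity $a\oplus b$, whence the two identities are immediate. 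There is no serious obstacle here: the only points requiring genuine care are checking that the Bloch parameterisation exhausts all projective single-qubit measurements (every rank-one qubit projector has this form) and handling the transpose correctly in the ricochet step. The conceptual content is simply that maximal entanglement forces uniform local statistics, which pins the correlation table down to a single parity-odd parameter.
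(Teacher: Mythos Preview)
Your proof is correct and takes a genuinely different route from the paper's. The paper parameterises each measurement basis by Bloch angles $(\theta_i,\phi_i)$, writes out the four amplitudes $\braket{a_s b_t}{\phi^+}$ explicitly, and checks by direct computation that pairs of amplitudes differ only by global phases $e^{\pm i(\phi_s\pm\phi_t)}$, whence their moduli squared coincide. Your primary argument instead exploits only that the reduced state on either qubit is $\mathds{1}/2$: rank-one projective effects then force both local marginals to be uniform, and the two identities drop out by subtracting marginal constraints. This is cleaner and strictly more general---it applies unchanged to any two-qubit state with maximally mixed marginals, not just $\ket{\phi^+}$, whereas the paper's calculation is tied to the explicit amplitudes for that particular state. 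Your secondary Bloch/ricochet computation is closer in spirit to the paper's approach but packaged more transparently: deriving the closed form $p(a,b|s,t)=\tfrac14\bigl(1+(-1)^{a+b}c_{s,t}\bigr)$ makes the parity-only dependence manifest in one line, rather than via four separate phase relations. The paper's route buys nothing additional here; yours is shorter and isolates the structural reason (uniform marginals) behind the symmetry.
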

 \begin{proof}
According to the Born rule, the probability \(p(a,b|s,t)\) is given by $p(a,b|s,t)= \textrm{Tr}[(A_{a|s}\otimes B_{b|t})\ketbra{\phi^+}{\phi^+}]$, where  $A_{a|s}, B_{b|t}$ are given by $\{A_{0|s}= \ketbra{0_s}{0_s}, A_{1|s} = \ketbra{1_s}{1_s}\}$ and $\{B_{0|t} = \ketbra{0_t}{0_t}, B_{1|t}= \ketbra{1_t}{1_t}\}$, respectively, with 
 \begin{align}
     \ket{0_i} &= \cos\frac {\theta_i}2 \ket{0} + e^{i \phi_i} \sin\frac {\theta_i}2 \ket{1}\\
     \ket{1_i} &= \sin\frac {\theta_i}2 \ket{0} - e^{-i \phi_i} \cos\frac {\theta_i}2 \ket{1},
 \end{align}
where $\theta_i$ and $\phi_i$ are the angles defining the Bloch representation of the vectors, with $i\in\{s,t\}.$
Given that  $ \braket{0_s 0_t}{\phi^+} =  e^{-i(\phi_s + \phi_t)} \braket{1_s 1_t}{\phi^+}$ and $\braket{0_s 1_t}{\phi^+} =  -e^{-i(\phi_s - \phi_t)} \braket{1_s 0_t}{\phi^+}$, we obtain \begin{equation*}p(0,0\vert s,t) =|\braket{0_s 0_t}{\phi^+}|^2 =|\braket{1_s 1_t}{\phi^+}|^2 = p(1,1 \vert s,t) \end{equation*} and \begin{equation*} p(0,1\vert s,t) = |\braket{0_s 1_t}{\phi^+}|^2 = |\braket{1_s 0_t}{\phi^+}|^2 = p(1,0 \vert s,t).\end{equation*}

\end{proof}

We now state and prove our main theorem relating XOR and XOR* games. Most of the proof consists of showing that for any strategy employing a Bell state in an XOR game there exists a two-dimensional reversible strategy -- that is, a strategy involving a resource system which is two-dimensional (\textit{e.g.}, a bit or a qubit) and controlled gates that are reversible transformations -- in the corresponding XOR* game achieving the same performance, and vice versa. We denote the equivalence between the two quantum-over-classical computational advantages for the two classes of games as \(\Omega (\textup{XOR}) = \Omega(\textup{XOR}^*|^{Rev}_{2D})\).  In the proof we adopt a diagrammatic approach  (see Figure \ref{fig:slidingproof}) exploiting the sliding rule (Proposition 4.30, \cite{coecke2018picturing}), which is standard in the formulation of quantum mechanics as a process theory \cite{coecke2018picturing}. Quantum mechanics as a process theory inherits the structure of a symmetric monoidal category, and so can be represented by a theory of systems (wires) and processes (boxes) admitting sequential and parallel composition. \textit{States} and \textit{effects} are special types of processes represented as boxes with trivial inputs and outputs, respectively. Cup or cap shaped wire deformations are interpreted as Bell states and effects, which are fundamental for establishing a duality between processes and bipartite states \cite{JAMIOLKOWSKI, CHOI}.

\begin{theorem}
\label{theorem}
Consider two sets $\mathcal{S}$ and $\mathcal{T}$ where \(\textup{min}(|\mathcal{S}|,|\mathcal{T}|)\leq 4\).  For every XOR game with output bits \(a\) and \(b\), whose inputs $s,t$ are elements of the sets $\mathcal{S}$ and $\mathcal{T}$, there exists an XOR* game with a two-dimensional resource system and involving reversible gates that has the same inputs $s\in\mathcal{S}$ and $t\in\mathcal{T}$ and output bit \(m = a\oplus b\) such that the XOR and XOR* games have the same classical and quantum bounds. The converse implication also holds. As a consequence, the games  show the same quantum-over-classical
advantage, \textit{i.e.}, \(\Omega (\textup{XOR}) = \Omega(\textup{XOR}^*|^{Rev}_{2D})\). 
\end{theorem}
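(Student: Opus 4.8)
The plan is to establish the claimed equivalence by constructing an explicit, performance-preserving bijection between Bell-state strategies in an XOR game and two-dimensional reversible strategies in the corresponding XOR* game, and then arguing that optimality is preserved in both directions. By Lemma \ref{lemma_cleve}, the cardinality bound \(\textup{min}(|\mathcal{S}|,|\mathcal{T}|)\leq 4\) guarantees that the XOR game is an ebit-XOR game, so we lose nothing by restricting attention to strategies using a single Bell state \(\ket{\phi^+}\) together with projective single-qubit measurements \(A_{a|s}\) and \(B_{b|t}\). The diagrammatic heart of the argument is to read the bipartite diagram --- state \(\ket{\phi^+}\) feeding into Alice's and Bob's controlled measurements --- as a cup, and then apply the sliding rule (Proposition 4.30 of \cite{coecke2018picturing}) to \emph{slide} one party's measurement around the cup. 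Concretely, I would use the process-theoretic duality between bipartite states and processes (via the cup/cap and the Choi--Jamio\l{}kowski correspondence \cite{JAMIOLKOWSKI, CHOI}) to transform Alice's effect into a reversible controlled transformation \(U_s\) acting on a single two-dimensional wire, leaving Bob's measurement as the final fixed measurement preceded by his controlled operation \(V_t\). This turns the spatially separated bipartite diagram into a sequential monopartite diagram, which is exactly the XOR* setup of Figure \ref{fig:games_diagrams}.

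Having obtained the diagrammatic correspondence, the next step is to relate the winning statistics. Here Lemma \ref{lemma_symmetry} does the essential bookkeeping: the symmetry \(p(0,0|s,t)=p(1,1|s,t)\) and \(p(0,1|s,t)=p(1,0|s,t)\) of the Bell-state correlations means that the joint distribution \(p(a,b|s,t)\) depends only on the parity \(a\oplus b\). I would use this to show that the XOR predicate \(W_f(a\oplus b|s,t)\), which rewards \(f(s,t)=a\oplus b\), is satisfied with exactly the probability that the single XOR* output \(m\) obtained from the slid diagram equals \(a\oplus b\). Identifying \(m=a\oplus b\) then gives the winning probability of the XOR* game equal to that of the XOR game, \emph{strategy by strategy}, which in particular equates the quantum values. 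For the classical bound I would run the same correspondence restricted to deterministic/classical strategies --- a classical shared bit playing the role of the Bell state and classical reversible (permutation) gates on a two-dimensional classical system --- so that the classical values coincide as well. The converse direction reverses the sliding: any two-dimensional reversible XOR* strategy is a sequential monopartite diagram whose controlled gate \(U_s\) can be slid back across a cup to produce a Bell-state XOR strategy with the same statistics.

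Combining the two directions, every Bell-state XOR strategy corresponds to a 2D reversible XOR* strategy of equal winning probability and vice versa, so the suprema over strategies agree, yielding \(\omega_q(\textup{XOR})=\omega_q(\textup{XOR}^*|^{Rev}_{2D})\) and \(\omega_c(\textup{XOR})=\omega_c(\textup{XOR}^*|^{Rev}_{2D})\); subtracting gives \(\Omega(\textup{XOR})=\Omega(\textup{XOR}^*|^{Rev}_{2D})\), as claimed.

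I expect the main obstacle to be making the sliding step fully rigorous rather than merely suggestive: one must verify that sliding Alice's controlled measurement around the cup genuinely produces a \emph{reversible} (unitary, or in the classical case permutation) controlled gate on a \emph{two-dimensional} wire, and that no additional normalization or non-signaling subtlety is introduced --- this is precisely where the two standing assumptions of reversibility and bi-dimensionality are consumed, consistent with the paper's later claim that both are strictly necessary. A secondary delicate point is matching the classical bounds: one has to confirm that the diagrammatic translation carries local hidden-variable (classical) XOR strategies onto exactly the classical 2D reversible XOR* strategies and not a strictly larger or smaller class, so that the classical value --- and not merely the quantum value --- is preserved under the correspondence.
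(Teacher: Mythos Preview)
Your proposal is correct and follows essentially the same approach as the paper: both use Lemma~\ref{lemma_cleve} to reduce to ebit strategies, the cup/sliding rule of \cite{coecke2018picturing} to turn the bipartite Bell-state diagram into a sequential single-qubit diagram (and vice versa), and Lemma~\ref{lemma_symmetry} to convert the resulting equality $p_{\textup{XOR}^*}(m|s,t)=2\,p_{\textup{XOR}}(a{=}0,b{=}m|s,t)$ into an equality of XOR winning probabilities. The only cosmetic differences are that the paper runs the sliding argument in the XOR*$\to$XOR direction (explicitly noting symmetry makes both directions valid) and handles the classical bound by a direct appeal to deterministic assignments rather than by rerunning the diagrammatic correspondence with a classical shared bit.
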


\begin{proof}
We start by establishing the connection between the quantum strategies in XOR* games and \textit{ebit}-XOR games following a diagrammatic approach (see figure \ref{fig:slidingproof}), explained below. 

\begin{figure*}
    \centering
     \includegraphics[width=0.9\textwidth]{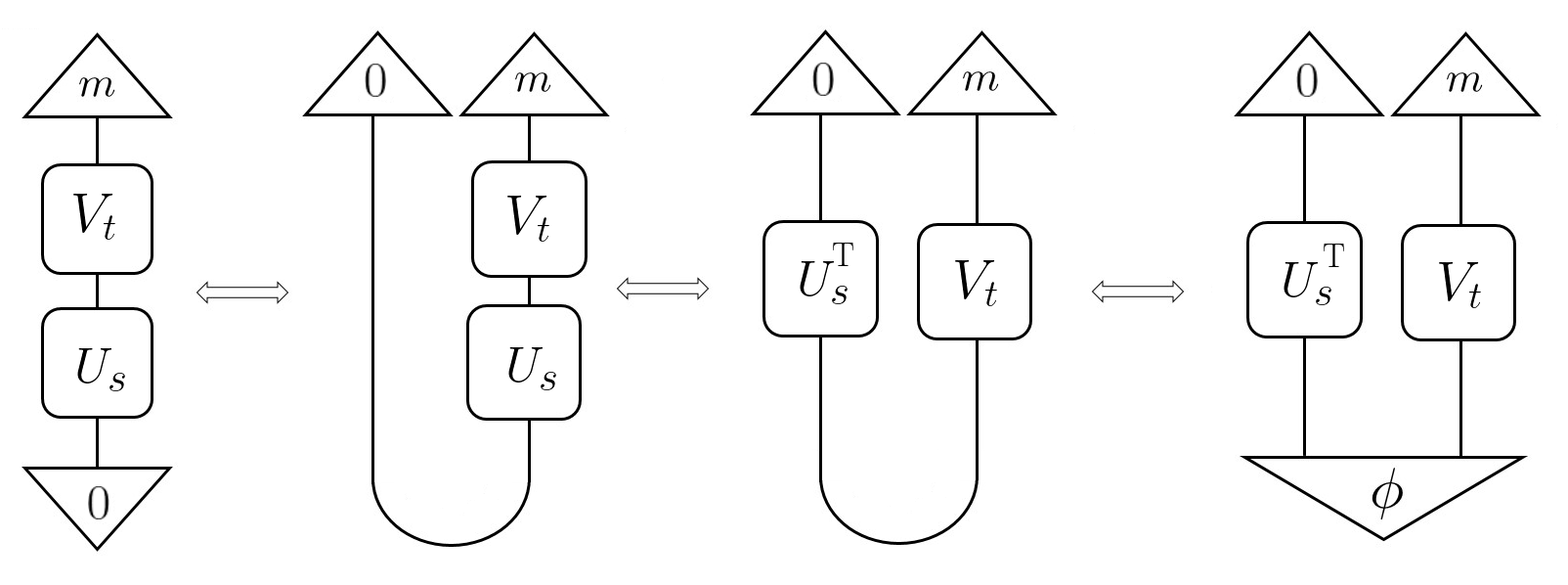}
    \caption{Diagrammatic representation of the mapping between the set of quantum strategies for ebit XOR games and bi-dimensional quantum reversible strategies for XOR* games.}
    \label{fig:slidingproof}
\end{figure*}

Let us consider the diagram on the left side of figure \ref{fig:slidingproof},\footnote{Although we are given a description of the diagrammatic proof left-to-right, because the sliding rule is symmetric the proof can alternatively be read from right-to-left, which is why the result holds in both directions, from the nonlocal to the monopartite sequential games, and vice versa.} that corresponds to the probability $p(m|s,t)$ associated with the setup of the XOR* game, which, for now, we assume it involves a qubit as resource system. Without loss of generality, we fix the state of the resource system to be $\ket{0}$ (choosing any other state would just correspond to implement an extra unitary in the controlled transformations). 
\begin{itemize}
    \item The first equivalence comes from the fact that a \textit{cup} can always be introduced in a diagram, through which we slide the \textit{state} \(\ket{0}\), turning it to an \textit{effect} (associated to the measurement element \(\ketbra{0}{0}\)). 
    \item The second equivalence comes from the fact that, according to the sliding rule (Proposition 4.30 \cite{coecke2018picturing}), Alice's unitary transformation can be sled through the cup, becoming its transpose in the process.
    \item The third equivalence is essentially the identification of the \textit{cup} with a Bell state. In this way we can now read the last diagram on the right side of figure \ref{fig:slidingproof} as describing a nonlocal game being played with such a Bell state.
\end{itemize}

 The diagrammatic equivalence in Figure \ref{fig:slidingproof} only works for transformations $U_s,V_t$ that are reversible and act on two-dimensional systems, and it shows that the probability $p(m|s,t)$ associated with the XOR* game is equivalent to the probability $p(a=0,b=m|s,t)$ associated with the \textit{ebit}-XOR game. We can extend this equivalence to any $p(a,b|s,t)$, with values of $a$ different from $0$, because, for XOR games, via Lemma \ref{lemma_symmetry}, we have that \(p(0,0\vert s,t) = p(1,1 \vert s,t)\) and  \( p(0,1\vert s,t) = p(1,0 \vert s,t) \).
 
 As shown by the previous diagram,  there is a bijective mapping between the sets of strategies using at most 1-\textit{ebit} resources in XOR games and the set of two-dimensional reversible strategies in XOR* games. This means that, for any (classical or quantum) strategy one can produce in one game, a corresponding strategy can be produced in the other achieving the same performance, and vice versa.
Furthermore, since in ebit-XOR games  the Tsirelson-bound is necessarily achievable by at most 1-ebit strategies, and from Lemma \ref{lemma_cleve} we know that any XOR game where \(\textup{min}(|\mathcal{S}|,|\mathcal{T}|)\leq 4\) is of this kind, then we are guaranteed that for games whose input sets' cardinality does not exceed 4 the quantum optimal strategies in both games provide the same values.  

Let us now consider the classical case. Since optimal classical strategies can be understood as deterministic assignments from inputs to outputs \cite{cleve_consequences_2010}, and the input and output sets of a pair of XOR  and XOR* games are by definition equal,\footnote{The output set in the case of XOR games is intended to be the set whose elements are the XOR of Alice's and Bob's output elements.} the set of reversible functions for both the XOR games (about which irreversible strategies cannot outperform reversible ones \footnote{This fact can be deduced from  Proposition 2 in \cite{cleve_consequences_2010}, whose consequence is that optimal strategies for XOR games are guaranteed by unitary operations in the quantum case, and reversible functions in the classical case.}) and XOR* games coincide, as do their Bell bounds.

\end{proof}

For those unfamiliar with the formalism of process theories, the proof can alternatively be stated with the standard quantum mechanical formalism. In particular, the sliding rule used in the diagrammatic proof above corresponds to a standard teleportation protocol, where by exploiting shared entanglement Alice teleports her gate to Bob's wing.\footnote{For an explicit circuit diagram, see  Figure 2 of \cite{Henaut2018}.}

\subsection{On the assumptions of the theorem}

Theorem \ref{theorem} crucially relies on three assumptions: an upper bound on the minimum cardinalities of the input sets, the reversibility of the gates and the two-dimensionality of the resource system in the XOR* games.
The first assumption imposes a constraint on the XOR side of the proof and is a direct consequence of Lemma \ref{lemma_cleve}. It guarantees that for XOR games respecting the bound on the cardinalities the quantum bound can be achieved with a strategy involving a Bell pair, which is crucial for establishing the mapping  in the direction from the XOR to the XOR* games. Notice that the assumption is sufficient but not necessary for guaranteeing that XOR games can be optimally won with a strategy employing a Bell pair, \textit{i.e.}, there could exist ebit XOR games violating the assumption. 

The remaining assumptions (reversibility and bi-dimensionality of the resource systems) impose constraints on the XOR* side of the proof, and guarantee that the map holds from the XOR* to the XOR games. These assumptions are necessary for establishing the mapping in the theorem, indeed there exist XOR* games for which increasing the dimension of the resource system or using irreversible gates allow to obtain higher classical and quantum bounds, while leaving unchanged the bounds in the corresponding XOR game. Notice how this does not mean that \textit{the difference} between the quantum and classical bounds can increase by dropping any of the two assumptions. Hence, given that one may just be interested in obtaining the highest quantum-over-classical advantage, it is not obvious whether increasing the dimension of the resource system or adopting irreversible gates can serve for such a purpose. For example, in the CHSH* game \cite{Henaut2018} it is shown that either considering a Reset gate (irreversible transformation) or going to dimension three, increases both the classical and quantum bounds to the value 1, thus decreasing the quantum-classical performance gap.
In what follows, we show that not all XOR* games are like the CHSH* game, in that there exist examples whose quantum-classical gap can increase by introducing resource systems of dimension greater than two or Reset gates. 

\paragraph{Increasing the dimensionality of the resource system --} An XOR* game showing an increase in the quantum-classical performance gap due to the use of a resource system with dimension greater than two is the binary output Torpedo game introduced in Emeriau et al. \cite{Emeriau2020}. The latter is an XOR* game with 3 inputs for Alice and 4 for Bob (see Example 5 in subsection \ref{Section_Examples}). It has been shown in \cite{Emeriau2020} with numeric methods that the game manifests the following quantum-classical gaps: \(\Omega(\textup{Bit-Torp.}|_{2D}) = 0.039\) and \(\Omega(\textup{Bit-Torp.}|_{3D}) = 0.042\). The fact that the dual XOR version of the binary output Torpedo game has quantum-classical gap equal to the one of the binary output Torpedo game with bi-dimensional resource system is then enough to show that Theorem \ref{theorem} without the bi-dimensionality assumption of the resource system does not hold.

\paragraph{Employing irreversible transformations --} In \ref{appendixA} we provide an example of an XOR* game which shows how  irreversibility can increase the quantum-classical gap. Remarkably, this example demonstrates the potentially resourceful role of irreversibility for establishing a quantum-over-classical advantage in XOR* games: the game manifests zero quantum-classical gap when allowing only reversible gates, but it shows a positive quantum-classical gap when employing Reset gates. This demonstrates a phenomenon of \textit{reset-induced gap activation}, and shows how irreversibility can create a quantum-classical gap, rather than just increase an already existing one. It is not known whether this type of activation is also possible in the previous case where one increases the bi-dimensionality of the resource system.\\

\subsection{Examples}
\label{Section_Examples}

Theorem \ref{theorem} states that any two-player XOR (or XOR*) game defined over the inputs sets $\mathcal{S}$ and $\mathcal{T}$, respecting a cardinality upper bound \(min(\mathcal{S},\mathcal{T})\leq 4\), is guaranteed to have a natural XOR* (or XOR) version, exhibiting the same classical and quantum bounds, for strategies in the XOR* game with a two-dimensional resource system and involving reversible gates. Accordingly, a number of examples of XOR games have a natural XOR* counterpart exhibiting the same quantum-over-classical advantage. This was already known for the CHSH game whose counterpart is the CHSH* game, but it is also true for other instances of XOR games already present in the literature. On the other hand, protocols that belong to the class of XOR* games subjected to the appropriate artificial constraints also have a natural XOR variant, like the \(2\rightarrow 1\) QRAC. In our categorization, the latter has a natural XOR game counterpart which we call \(2 \rightarrow 1\) QRAC\(^{\oplus}\) (see Example 4). Notice how unlike other approaches which associate the \(2 \rightarrow 1\) QRAC with the CHSH game \cite{galvao}, our mapping connects the CHSH and QRAC not with each other but to the CHSH* and \(2 \rightarrow 1\) QRAC\(^{\oplus}\), respectively.

Because of the connection established by this theorem, we will call a couple of connected XOR and XOR* games as a \textit{dual pair} XOR/XOR*, specified by the input sets, corresponding probability distributions, and the shared task function \(f\). It is implicit that there is a unique output bit, \(m\), in the XOR* game, and two individual output bits \(a,b\) in the nonlocal game, such that they are related by \(m = a\oplus b\).  We now provide several examples of dual pairs of XOR/XOR* games.

\begin{example}[CHSH \cite{CHSH69}/CHSH* \cite{Henaut2018}]
A two-player XOR/XOR* game  with inputs \(s,t \in \{0,1\}\) where, \(\forall_{s,t}  \; p(s,t)={1}/{4},\) and 
 \begin{equation} f(s,t):= s\cdot t.
\end{equation}
The classical and quantum bounds are, respectively,
\begin{equation*}
\begin{matrix} 
\omega_c(\textup{CHSH}) = \omega_c(\textup{CHSH}^*|^{Rev}_{2D})= & 3/4\\ 
\omega_q(\textup{CHSH}) = \omega_q(\textup{CHSH}^*|^{Rev}_{2D})= & \cos^2(\pi/8).
\end{matrix}
\end{equation*}
\end{example}

\begin{example}[\(n\)-Odd Cycle \cite{cleve_consequences_2010}/\(n\)-Odd Cycle*]
A two-player dual pair XOR/XOR*  with inputs \(s \in \{1,2, ... ,n\}\) for odd \(n\geq 3\), and \(t \in \{s, \;s\oplus 1 \mod n\}\) where, \(\forall_{s,t}  \;\;\; p(s,t=s)={1}/{2},\;\; p(s,t= s\oplus1 )={1}/{2},\) and
 \begin{equation} f(s,t):=[s\oplus 1 = t \;(\textup{mod}\; n)],
\end{equation} where the notation \([s\oplus 1 = t \;(\textup{mod}\; n)]\) denotes the truth value of the proposition \(s\oplus 1 = t \;(\textup{mod}\; n)\). The classical and quantum bounds are, respectively,
\begin{equation*}
\begin{matrix} 
 \omega_c(n\textup{-OC}) = \omega_c(n\textup{-OC}^*|^{Rev}_{2D})  = & 1 -
1/2n\\ 
\omega_q(n\textup{-OC}) = \omega_q(n\textup{-OC}^*|^{Rev}_{2D}) = & \cos^2(\pi/4n).
\end{matrix}
\end{equation*}

\end{example}

\begin{example}[EAOS \cite{brukner_entanglement-assisted_2006,faleiro_quantum_2019}/EAOS* \cite{coiteux-roy_advantage_2020} ]
A two-player dual pair XOR/XOR* with inputs \(s,t \in \{1,2,3\}\) where, \(\forall_{s,t}  \; p(s,t)={1}/{9},\) and
\begin{equation} f(s,t):=  \overline{\delta(s,t)},
\end{equation} where \(\delta(s,t)\) is the Kronecker delta function evaluating to 1 only when \(s=t\) and 0 otherwise,  and \(\overline{\delta(s,t)}\) its negation.
The classical and quantum bounds are, respectively,
\begin{equation*}
\begin{matrix} 
 \omega_c(\textup{EAOS}) = \omega_c(\textup{EAOS}^*|^{Rev}_{2D})  =  & 7/9\\ 
\omega_q(\textup{EAOS}) = \omega_q(\textup{EAOS}^*|^{Rev}_{2D}) = & 5/6.
\end{matrix}
\end{equation*}

\end{example}

\begin{example}
[$2 \rightarrow 1$ QRAC\(^{\oplus}\) \cite{Chailloux2016}/ $2 \rightarrow 1$ QRAC \cite{Ambainis1999}]
A two-player dual pair XOR/XOR* with inputs $\textbf{s} = (s_0, s_1)$ for  \(s_0,s_1 \in \{0,1\}\), \(t \in \{0,1\}\) where, \(\forall_{s,t}  \; p(\textbf{s},t)={1}/{8},\) and
\begin{equation} f(\textbf{s},t):=  s_0\cdot (t\oplus 1) \oplus s_1\cdot t .
\end{equation}

The classical and quantum bounds are, respectively,
\begin{equation*}
\begin{matrix} 
 \omega_c(\textup{$2 \rightarrow 1$ QRAC\(^{\oplus}\)}) = \omega_c(\textup{$2 \rightarrow 1$ QRAC}|^{Rev}_{2D}) =  &  3/4\\ 
 \omega_q(\textup{$2 \rightarrow 1$ QRAC\(^{\oplus}\)}) = \omega_q(\textup{$2 \rightarrow 1$ QRAC}|^{Rev}_{2D})=  & \cos^2(\pi/8).
\end{matrix}
\end{equation*}
\end{example}

Notice how, according to our framework, the nonlocal XOR game associated to the $2 \rightarrow 1$ QRAC is not the CHSH game, but an XOR game, that we denote as $2 \rightarrow 1$ QRAC\(^{\oplus}\). This connection between the QRACs and nonlocal games had already been studied in \cite{Chailloux2016}, where the corresponding nonlocal version were denoted as INDEX games.

\begin{example}
[Bit-Torpedo\(^{\oplus}\)/ Bit-Torpedo \cite{Emeriau2020}] 
A two-player dual pair XOR/XOR* with inputs $\textbf{s} = (s_0, s_1)$ for  \(s_0,s_1 \in \{0,1\}\), \(t \in \{0,1,2\}\) where, \( \forall_{s,t}  \; p(\textbf{s},t)={1}/{12},\) and
\begin{multline}
     f(\textbf{s},t):= \left(s_0\cdot (t\oplus_3 1)\!\!\!\!\mod 2\right)\oplus \left(s_1\cdot t\!\!\!\!\mod 2\right) \oplus \\\left((s_0\oplus s_1)\cdot (t\oplus_3 2)\!\!\!\!\mod 2\right).
\end{multline}

The classical and quantum bounds are, respectively,
\begin{equation*}
\begin{matrix} 
\omega_c(\textup{\textup{Bit-Torp.}\(^{\oplus}\)}) = \omega_c(\textup{Bit-Torp.}|^{Rev}_{2D}) = &  3/4\\ 
 \omega_q(\textup{Bit-\textup{Torp.}\(^{\oplus}\)}) = \omega_q(\textup{Bit-Torp.}|^{Rev}_{2D})  \simeq  & 0.789.
\end{matrix}
\end{equation*}
\end{example}

Notice that we write ``\(\oplus_3\)'' to denote the sum modulo 3 and distinguish it from the usual XOR operation ``\(\oplus\)'' that denotes the sum modulo 2.  In comparison with Example 4, the task function for the binary output Torpedo game -- here denoted as  ``Bit-Torpedo'' -- can be seen to be a generalisation of the $2 \rightarrow 1$ QRAC, where, in addition of Bob possibly being asked to output either bit corresponding to Alice's inputs, he can be further asked to output their parity. Notice also that due to the mapping of Theorem \ref{theorem}, we are able to automatically give a nonlocal version of the binary output Torpedo game, that we denote as ``Bit-Torpedo\(^{\oplus}\)''.

\begin{example}
[GBHA-\(I_3^{\oplus}\)/ GBHA-\(I_3\) \cite{Gallego2010}] 
A two-player dual pair XOR/XOR*  with inputs \(s\in \{0,1,2\}\) and  \(t\in \{0,1\}\) where \(p(2,1)=0\) and  \( p(s,t)={1}/{5}\) otherwise, and 
 \begin{equation} f(s,t):=\Theta_{2}(s+t),
\end{equation} where $\Theta_{2}(s+t)$ is the Heaviside step-function evaluating to 1 when \(s+t \geq 2\), and 0 otherwise. The classical and quantum bounds are, respectively,
\begin{equation*}
\begin{matrix} 
\omega_c(\textup{GBHA-}I_3^{\oplus}) = \omega_c(\textup{GBHA-}I_3|^{Rev}_{2D}) = & 4/5\\ 
 \omega_q(\textup{GBHA-}I_3^{\oplus}) = \omega_q(\textup{GBHA-}I_3|^{Rev}_{2D}) \simeq   & 0.88.
\end{matrix}
\end{equation*}

\end{example}

The previous game was introduced in Gallego et al. \cite{Gallego2010} to serve as a dimensional witness (detecting, from the performance of the game, whether the dimension of the resource system is greater than two). The game is a straightforward generalization of the CHSH* game, where Alice instead of receiving a bit receives a trit.  Indeed, by reducing Alice's input cardinality from three to two, one eliminates the (2,0) and (2,1) input possibilities, the Heaviside function reduces to the AND operation (or product) of the input bits, and the game becomes a version of the CHSH* game.

\section{On the source of quantum computational advantage in XOR* games}
\label{Section_advantage}

Although for XOR games nonlocality has been shown to be a source of the quantum advantage \cite{B14}, it is still not clear which nonclassical feature plays the same role in XOR* games. Since such games employ monopartite computational resources, nonlocality can be immediately excluded as a possible candidate. Similarly, since these protocols only consider a final fixed measurement on a two level system, Kochen-Specker contextuality, which requires varied sets of commuting observables in order to be witnessed, and which in any case does not arise in Hilbert spaces of dimension two, can also be excluded as a possible candidate.

It is possible to identify a resource for computational advantage in at least one specific class of XOR* games. In communication tasks with appropriate parity constraints it is known that preparation contextuality provides a resource for the advantage \cite{Chaturvedi2017,Ambainis2019}, the golden example being the \(2 \rightarrow 1\) parity oblivious multiplexing task \cite{Spekkens2009}.
Can we extend these results to the whole class of XOR* games? One thing that we can do is to leverage the result of Theorem \ref{theorem} and the mapping it involves to show that XOR* games, if interpreted as prepare-and-measure scenarios, are powered by preparation contextuality. Indeed, we know that in XOR games nonlocality is a resource to outperform classical strategies and, in a (bipartite) Bell scenario like that of XOR games, any proof of nonlocality can be mapped to a proof of preparation contextuality in either wing of the experiment: in Bob's (Alice's) wing one considers the prepare and measure scenario where the preparation is specified by the state Alice (Bob) steers Bob’s (Alice's) system to, and the measurement is Bob's (Alice's) measurement. This steering process corresponds exactly to the mapping involved in Theorem~\ref{theorem}, which turns the Bell scenario into a prepare-and-measure scenario.
In \ref{appendixB} we provide the definitions of ontological models, preparation noncontextuality, and a detailed proof of the above claim.

There is an important consideration to bear in mind about the role of preparation contextuality in XOR* games, however.
The mapping of Theorem \ref{theorem}, via the above argument, does \textit{not} send proofs of nonlocality in a given XOR game to proofs of preparation contextuality in the dual XOR* game, but rather guarantees such a proof for an XOR* game in a scenario whose set of inputs has larger cardinality -- see \ref{appendixB}.
As a crucial consequence, we cannot prove preparation contextuality via the mapping above in the case of an XOR* game with binary inputs, $s,t\in\mathbb{Z}_2,$ as is the case for the CHSH* game. The CHSH* game indeed seems not to allow for a proof of preparation contextuality as it only involves four inputs in total, and yet four inputs cannot select four preparations and two measurements, which is known to be the minimum requirement for a proof of preparation contextuality \cite{pusey2018robust}.
Due to this observation, it cannot be said that the mapping from XOR to XOR* games lifts in an obvious way to a mapping from the source of advantage in XOR games to that in XOR* games.

In summary, when confronted with the appearance of a quantum-over-classical advantage in XOR* games, we are pressed to pursue new ways of  identifying the feature that corresponds to nonlocality in XOR games. Given that the only degrees of freedom in the XOR* games are in the choices of controlled transformations, it seems natural to focus on a property of transformations as the source of computational advantage. 
In this respect, let us consider the notion of sequential transformation contextuality developed in \cite{Mansfield2018}, that has indeed been proven to be necessary for quantum advantage in a broad class of information retrieval games (which are instances of monopartite sequential games) \cite{Emeriau2020} as well as the CHSH* game. 

A caveat to the latter results is that sequential transformation contextuality applies either (a) to cases where the ontological model is required to preserve the dimensional restriction defining the operational setting or (b) to cases where the ontological model is required to satisfy $\oplus L$-ontology, which essentially restricts transformations to being represented as sums modulo $2$ (this assumption is advocated for in cases where the operational setting is the one of a parity computer \cite{Mansfield2018}).
So, sequential transformation contextuality is only relevant relative to the extent to which these conditions hold.
While the conditions can be argued to be natural in certain informational or computational tasks, we cannot claim that they will always be reasonable assumptions in general.

Is assumption (a) -- rather than (b), which is unwarranted -- natural in the case of XOR* games? If one takes the notion of naturalness as the notion of Leibnizianity \cite{Schmid2021unscrambling} (also stated as a no fine-tuning principle \cite{CataniLeifer2020}) -- \textit{i.e.}, that the operational equivalences predicted in principle to hold by the theory are preserved at the ontological level -- which is at the base of the principle of generalized noncontextuality, assumption (a) is natural as long as the dimensional restriction is written in terms of operational quantities and is empirically verifiable. 
However, it is important to notice that it is the notion of sequential transformation noncontextuality, where contexts are sequences of transformations, that is not natural in this perspective because, in general, one cannot verify an operational equivalence. Indeed, it is not possible to isolate a transformation in order to perform process tomography on it since it manifests in a specific context or sequence of transformations. Therefore, if one wants to stick to the credentials of the principle of generalized noncontextuality one would also need to replace sequential transformation noncontextuality accordingly.
Rather than such credentials, the latter is motivated from a perspective in which compositionality is taken to be the notion of naturalness -- meaning that operational composition should be reflected at the ontological level.

Let us conclude by stressing that, whatever the notion of noncontextuality adopted, it seems unavoidable the need to impose restrictions on the ontological models associated with the setup of XOR* games, for otherwise no meaningful results can be obtained (clearly there always exists a noncontextual ontological model that can compute a function $f(s,t)=m$ -- and that is the one associated with a classical computer).
In particular, it would seem natural that the ontological models have to be restricted such that the operational features of the XOR* games are preserved therein, \textit{i.e.}, the transition matrices associated with the transformations are reversible (permutations in the discrete case) and the dimensionality of the ontological model is constrained according to the fact that the resource system is a two-level system.
Then, for one who subscribes to the credentials of Leibnizianity/no-fine tuning, the aim would be to adopt the notion of noncontextuality that exploits operational equivalences arising in the XOR* games. A possibility is that the operational equivalences arise in fact from the operational restrictions -- reversibility and the bi-dimensionality of the resource system -- defining the XOR* games, and so assuming noncontextuality would already take care of appropriately restricting the ontological models. 
Ultimately, it would be desirable to show that every time one implements a strategy that wins the game with a probability greater than the Bell bound, this is inconsistent with the naturally justified restrictions on the ontological models and the adopted notion of (transformation) noncontextuality. We leave the development of this project for future research.

\section{Conclusion}
\label{conclusions}
The CHSH* game and its relationship to the CHSH game were first studied in \cite{Henaut2018}. 
Although belonging to distinct setups---the CHSH game being a nonlocal game and the CHSH* game a monopartite sequential game---they show the same classical and quantum bounds and they can be connected via an explicit mapping.
In this work we generalised such mapping to a broader class of games. 
More precisely, we defined a new class of monopartite sequential games---the XOR* games---with the idea of providing a natural generalisation of the CHSH* game, in the same way XOR games are a generalisation of the CHSH game. 
In addition to providing this new categorization of nonlocal and monopartite sequential games, we proved a theorem stating that for two input sets whose cardinality does not exceed $4$, there is a mapping between dual pairs of XOR/XOR* games such that they manifest the same quantum and classical performance bounds, for strategies in the XOR* game involving two-dimensional resource systems and reversible gates. We further presented two examples of XOR* games to demonstrate how lifting the assumptions of reversibility of gates and bi-dimensionality of the resource system assumptions leads to the mapping no longer holding. 

To summarize our main contribution, we showed that under certain conditions the dual pair XOR/XOR* manifests the same quantum-over-classical advantage, and the dual pair CHSH/CHSH*  is a particular example of this. We also provided other examples of such dual pairs (namely, \(n\)-Odd Cycle/\(n\)-Odd Cycle* games, EAOS/EAOS* games,
$2 \rightarrow 1$ QRAC\(^{\oplus}\)/ $2 \rightarrow 1$ QRAC games,
Bit-Torpedo\(^{\oplus}\)/ Bit-Torpedo games).
In addition to this finding, we also pointed out how the mapping that we established in Theorem \ref{theorem}, supplemented with the fact that nonlocality is a resource for outperforming classical strategies in XOR games, provides a way of showing that preparation contextuality is a resource that powers certain XOR* games when treated as prepare and measure scenarios. However, we also highlighted the limitations of this claim, the main one being that it does not apply to the CHSH* game. 

 Our work is related to other works focused on monopartite sequential protocols that show the same performances of the CHSH game. In particular, protocols known as temporal CHSH scenarios use controlled measurements on a single system and are proven to manifest the same Bell and Tsirelson bounds as the CHSH game \cite{Fritz2010,Budroni2013,Budroni2014,Markiewicz2014,Le2017,Hoffman2018,Budroni2019,Spee2020,Spee2020simulating,Vieira2022,Mao2022}. However, their focus is mostly on characterizing the set of allowed correlations and to test the assumptions of macrorealism and non-invasiveness (Legget-Garg inequalities) \cite{LeggettGarg1985,Lapiedra2006,Avis2010}. Alternatively, in \cite{Brierley} the authors have aimed to provide a different characterisation of non-classicality in the temporal setting, other than the Legget-Garg type, by analysing the cost (in terms of channel capacity) of classically simulating temporal correlations. It should be noted that, although from their perspective  any temporal qubit correlations are trivially classically simulatable by means of implementation of Toner-Bacon  protocols \cite{TonerBacon}, this would require extra ancillary systems, and thus, it is forbidden within our XOR* setup since it contradicts the assumption of using a single monopartite resource.\footnote{The single monopartite assumption is closer to the ``restricted working memory" constraint considered in the recently introduced computation model proposed by researchers at IBM \cite{IBM}.} A work that is more focused on the characterisation of games and their artificial constraints is \cite{Tavakoli2017}, that shows how preparation contextuality is a resource for communication games with constraints akin to the parity obliviousness in \cite{Spekkens2009}. However, these games do not always belong to the set of XOR* games that we consider in our work and therefore do not obey in general the mapping that we show in Theorem \ref{theorem}. 
That said, both the communication games studied in \cite{Tavakoli2017} and our XOR* games are strictly related to nonlocal games. Namely, in \cite{Tavakoli2017} a construction is shown that allows one to define communication games from nonlocal games, which in spirit bears some resemblance with the mapping from our Theorem \ref{theorem}. It should be noted, however, that even for examples of games that fit into both frameworks the mappings differ. For instance, in our framework the CHSH game is mapped naturally to the CHSH* game, whereas in  \cite{Tavakoli2017} the CHSH game is mapped to the $2\rightarrow 1$ POM which, although an XOR* game, it is evidently different from the CHSH* game.
Our XOR* games are also related to the protocols in \cite{Gallego2010}, even if in there they are treated as prepare and measure scenarios and are specifically devised to provide dimensional witnesses \cite{Brunner2008,Brunner2013,Guhne2014,Li2018,Sohbi2021}.  Finally, more recently in \cite{Diviánszky2023} a class of inequalities for two-dimensional systems in the prepare-and-measure scenario were established by means of numerical methods. From the game perspective, a prepare-and-measure scenario dealing with two-dimensional resources can be made to correspond to a two-player sequential monopartite game under the constraint of reversibility and two-dimensional resources. As such, since those inequalities could be naturally interpreted as XOR* games (meaning that there would exist a unique bit outcome from the measurement), our mapping would automatically establish a natural XOR version. Crucially, the number of preparations and measurements considered for some of the inequalities explored in \cite{Diviánszky2023} can go as high as 70 settings for preparations and measurements, which would drastically break the upper-bound on the cardinality that guarantees the optimality of a Bell pair in the corresponding XOR games. This motivates once again the pursuit of an alternative which would not be so restrictive in this regard.

We conclude by listing a couple of open questions originating from our work. First, it would be interesting to further generalise Theorem \ref{theorem} by enlarging the classes of games to consider, in particular by finding a proof strategy that does not rely on Lemma \ref{lemma_cleve} from \cite{cleve_consequences_2010}. 
The approach of \cite{tavakoli_predicate} developed to find a non trivial nonlocal version of the three dimensional Torpedo game \cite{Emeriau2020} might provide a first intuition towards such generalization. They use the same construction with Wigner negative states and phase-point operators of \cite{Emeriau2020} and they devise a strategy to appropriately modify the predicate of some monopartite sequential game in a way that the classical and quantum bounds of a corresponding nonlocal game match those of the original sequential monopartite game. This would allow to generalize Theorem \ref{theorem} to input sets of arbitrary cardinalities.
Second, in terms of the resources for computational advantage in XOR* games, it is worth exploring whether properties of transformations can be proven to be necessary for the quantum-over-classical advantage. In particular, we have advocated for the development of a notion of transformation noncontextuality with restrictions on the ontological models endorsing the same credentials of generalized noncontextuality.

\vskip6pt

\enlargethispage{20pt}

 \section*{Acknowledgements}{LC and AP acknowledge support from the Einstein Research Unit `Perspectives of a Quantum Digital Transformation’. AP also acknowledges support from the Emmy Noether DFG grant No. 418294583, and from the BMWK via the project Qompiler. RF thanks Emmanuel Zambrini Cruzeiro and Flavien Hirsch for discussions about the reset-induced gap activation example, and  acknowledges funding from FCT/MCTES through national funds and when applicable EU funds under the project UIDB/50008/2020, and support of the QuantaGenomics project funded within the QuantERA II Programme that has received funding from the European Union’s Horizon 2020 research and innovation programme under Grant Agreement No 101017733, and with funding organisations, The Foundation for Science and Technology – FCT (QuantERA/0001/2021), Agence Nationale de la Recherche - ANR, and State Research Agency – AEI.
}


\vskip2pc

\bibliographystyle{ieeetr}
\bibliography{sample}

\begin{thebibliography}{10}

\bibitem{CHSH69}
J.~F. Clauser, M.~A. Horne, A.~Shimony, and R.~A. Holt, ``Proposed experiment
  to test local hidden-variable theories,'' {\em Phys. Rev. Lett.}, vol.~23,
  pp.~880--884, Oct 1969.

\bibitem{Bell64}
J.~S. Bell, ``{On the Einstein Podolsky Rosen paradox},'' {\em Physics Physique
  Fizika}, vol.~1, pp.~195--200, Nov 1964.

\bibitem{Cirelson1980}
B.~S. Cirel'son, ``Quantum generalizations of bell's inequality,'' {\em Letters
  in Mathematical Physics}, vol.~4, pp.~93--100, Mar 1980.

\bibitem{MY04}
D.~Mayers and A.~Yao, ``Self testing quantum apparatus,'' {\em Quantum Info.
  Comput.}, vol.~4, p.~273–286, jul 2004.

\bibitem{P10}
S.~Pironio, A.~Ac{\'\i}n, S.~Massar, A.~B. de~La~Giroday, D.~N. Matsukevich,
  P.~Maunz, S.~Olmschenk, D.~Hayes, L.~Luo, T.~A. Manning, {\em et~al.},
  ``Random numbers certified by bell’s theorem,'' {\em Nature}, vol.~464,
  no.~7291, pp.~1021--1024, 2010.

\bibitem{VV14}
U.~Vazirani and T.~Vidick, ``Fully device-independent quantum key
  distribution,'' {\em Phys. Rev. Lett.}, vol.~113, p.~140501, Sep 2014.

\bibitem{Fine}
A.~Fine, ``Hidden variables, joint probability, and the bell inequalities,''
  {\em Phys. Rev. Lett.}, vol.~48, pp.~291--295, Feb 1982.

\bibitem{abramskysheaf}
S.~Abramsky and A.~Brandenburger, ``The sheaf-theoretic structure of
  non-locality and contextuality,'' {\em New J. Phys. 13 113036}, 2011.

\bibitem{Spekkens2009}
R.~W. Spekkens, D.~H. Buzacott, A.~J. Keehn, B.~Toner, and G.~J. Pryde,
  ``Preparation contextuality powers parity-oblivious multiplexing,'' {\em
  Physical Review Letters}, vol.~102, p.~010401, Jan. 2009.
\newblock arXiv: 0805.1463.

\bibitem{Abbott2012}
A.~A. Abbott, C.~S. Calude, J.~Conder, and K.~Svozil, ``Strong kochen-specker
  theorem and incomputability of quantum randomness,'' {\em Phys. Rev. A},
  vol.~86, p.~062109, Dec 2012.

\bibitem{Raussendorf2013}
R.~Raussendorf, ``Contextuality in measurement-based quantum computation,''
  {\em Phys. Rev. A}, vol.~88, p.~022322, Aug 2013.

\bibitem{Zhang2013}
M.~Um, X.~Zhang, J.~Zhang, Y.~Wang, S.~Yangchao, D.~L. Deng, L.-M. Duan, and
  K.~Kim, ``Experimental certification of random numbers via quantum
  contextuality,'' {\em Scientific Reports}, vol.~3, no.~1, p.~1627, 2013.

\bibitem{Howard2014}
M.~Howard, J.~Wallman, V.~Veitch, and J.~Emerson, ``Contextuality supplies the
  `magic' for quantum computation,'' {\em Nature}, vol.~510, pp.~351--355, Jun
  2014.

\bibitem{Guhne2014}
O.~G\"uhne, C.~Budroni, A.~Cabello, M.~Kleinmann, and J.-A. Larsson, ``Bounding
  the quantum dimension with contextuality,'' {\em Phys. Rev. A}, vol.~89,
  p.~062107, Jun 2014.

\bibitem{Delfosse2015}
N.~Delfosse, P.~Allard~Guerin, J.~Bian, and R.~Raussendorf, ``Wigner function
  negativity and contextuality in quantum computation on rebits,'' {\em Phys.
  Rev. X}, vol.~5, p.~021003, Apr 2015.

\bibitem{Chailloux2016}
A.~Chailloux, I.~Kerenidis, S.~Kundu, and J.~Sikora, ``Optimal bounds for
  parity-oblivious random access codes,'' {\em New Journal of Physics},
  vol.~18, p.~045003, apr 2016.

\bibitem{Raussendorf2017}
R.~Raussendorf, D.~E. Browne, N.~Delfosse, C.~Okay, and J.~Bermejo-Vega,
  ``Contextuality and wigner-function negativity in qubit quantum
  computation,'' {\em Phys. Rev. A}, vol.~95, p.~052334, May 2017.

\bibitem{Vega2017}
J.~Bermejo-Vega, N.~Delfosse, D.~E. Browne, C.~Okay, and R.~Raussendorf,
  ``Contextuality as a resource for models of quantum computation with
  qubits,'' {\em Phys. Rev. Lett.}, vol.~119, p.~120505, Sep 2017.

\bibitem{Tavakoli2017}
A.~Hameedi, A.~Tavakoli, B.~Marques, and M.~Bourennane, ``Communication games
  reveal preparation contextuality,'' {\em Phys. Rev. Lett.}, vol.~119,
  p.~220402, Nov 2017.

\bibitem{Oestereich2017}
A.~L. Oestereich and E.~F. Galv\~ao, ``Reliable computation from contextual
  correlations,'' {\em Phys. Rev. A}, vol.~96, p.~062305, Dec 2017.

\bibitem{CataniBrowne2018}
L.~Catani and D.~E. Browne, ``State-injection schemes of quantum computation in
  spekkens' toy theory,'' {\em Phys. Rev. A}, vol.~98, p.~052108, Nov 2018.

\bibitem{Mansfield2018}
S.~Mansfield and E.~Kashefi, ``Quantum {Advantage} from
  {Sequential}-{Transformation} {Contextuality},'' {\em Physical Review
  Letters}, vol.~121, p.~230401, Dec. 2018.
\newblock arXiv: 1801.08150.

\bibitem{Schmid2018}
D.~Schmid and R.~W. Spekkens, ``Contextual advantage for state
  discrimination,'' {\em Phys. Rev. X}, vol.~8, p.~011015, Feb 2018.

\bibitem{Frembs2018}
M.~Frembs, S.~Roberts, and S.~D. Bartlett, ``Contextuality as a resource for
  measurement-based quantum computation beyond qubits,'' {\em New Journal of
  Physics}, vol.~20, p.~103011, oct 2018.

\bibitem{Raussendorf2019}
R.~Raussendorf, J.~Bermejo-Vega, E.~Tyhurst, C.~Okay, and M.~Zurel,
  ``Phase-space-simulation method for quantum computation with magic states on
  qubits,'' {\em Phys. Rev. A}, vol.~101, p.~012350, Jan 2020.

\bibitem{Saha2019}
D.~Saha, P.~Horodecki, and M.~Paw{\l}owski, ``State independent contextuality
  advances one-way communication,'' {\em New Journal of Physics}, vol.~21,
  p.~093057, sep 2019.

\bibitem{SahaAnubhav2019}
D.~Saha and A.~Chaturvedi, ``Preparation contextuality as an essential feature
  underlying quantum communication advantage,'' {\em Phys. Rev. A}, vol.~100,
  p.~022108, Aug 2019.

\bibitem{Bharti2019}
K.~Bharti, M.~Ray, A.~Varvitsiotis, N.~A. Warsi, A.~Cabello, and L.-C. Kwek,
  ``Robust self-testing of quantum systems via noncontextuality inequalities,''
  {\em Phys. Rev. Lett.}, vol.~122, p.~250403, Jun 2019.

\bibitem{LostaglioSenno2020}
M.~Lostaglio and G.~Senno, ``Contextual advantage for state-dependent
  cloning,'' {\em {Quantum}}, vol.~4, p.~258, Apr. 2020.

\bibitem{lostaglio2020certifying}
M.~Lostaglio, ``Certifying quantum signatures in thermodynamics and metrology
  via contextuality of quantum linear response,'' {\em Physical Review
  Letters}, vol.~125, no.~23, p.~230603, 2020.

\bibitem{Yadavalli2020}
S.~A. Yadavalli and R.~Kunjwal, ``Contextuality in entanglement-assisted
  one-shot classical communication,'' {\em arXiv:2006.00469}, 2020.

\bibitem{Um2020}
M.~Um, Q.~Zhao, J.~Zhang, P.~Wang, Y.~Wang, M.~Qiao, H.~Zhou, X.~Ma, and
  K.~Kim, ``Randomness expansion secured by quantum contextuality,'' {\em Phys.
  Rev. Applied}, vol.~13, p.~034077, Mar 2020.

\bibitem{Emeriau2020}
P.-E. Emeriau, M.~Howard, and S.~Mansfield, ``Quantum advantage in information
  retrieval,'' {\em PRX Quantum}, vol.~3, p.~020307, Apr 2022.

\bibitem{Flatt2021}
K.~Flatt, H.~Lee, C.~R.~I. Carceller, J.~B. Brask, and J.~Bae, ``Contextual
  advantages and certification for maximum-confidence discrimination,'' {\em
  PRX Quantum}, vol.~3, p.~030337, Sep 2022.

\bibitem{Roch2021}
C.~Roch~i Carceller, K.~Flatt, H.~Lee, J.~Bae, and J.~B. Brask, ``Quantum vs
  noncontextual semi-device-independent randomness certification,'' {\em Phys.
  Rev. Lett.}, vol.~129, p.~050501, Jul 2022.

\bibitem{Ambainis1999}
A.~Ambainis, A.~Nayak, A.~Ta-Shma, and U.~Vazirani, ``Dense quantum coding and
  a lower bound for 1-way quantum automata,'' in {\em Proceedings of the
  Thirty-First Annual ACM Symposium on Theory of Computing}, STOC '99, (New
  York, NY, USA), p.~376–383, Association for Computing Machinery, 1999.

\bibitem{Fritz2010}
T.~Fritz, ``Quantum correlations in the temporal
  clauser{\textendash}horne{\textendash}shimony{\textendash}holt ({CHSH})
  scenario,'' {\em New Journal of Physics}, vol.~12, p.~083055, aug 2010.

\bibitem{Henaut2018}
L.~Henaut, L.~Catani, D.~E. Browne, S.~Mansfield, and A.~Pappa, ``Tsirelson's
  bound and {Landauer}'s principle in a single-system game,'' {\em Physical
  Review A}, vol.~98, p.~060302, Dec. 2018.
\newblock arXiv: 1806.05624.

\bibitem{cleve_consequences_2010}
R.~Cleve, P.~Hoyer, B.~Toner, and J.~Watrous, ``Consequences and limits of
  nonlocal strategies,'' in {\em Proceedings. 19th IEEE Annual Conference on
  Computational Complexity, 2004.}, Proceedings. 19th IEEE Annual Conference on
  Computational Complexity, 2004., pp.~236--249, 2004.

\bibitem{brukner_entanglement-assisted_2006}
C.~Brukner, N.~Paunkovic, T.~Rudolph, and V.~Vedral, ``Entanglement-assisted
  orientation in space,'' {\em International Journal of Quantum Information},
  vol.~04, pp.~365--370, Apr. 2006.
\newblock Publisher: World Scientific Publishing Co.

\bibitem{faleiro_quantum_2019}
R.~Faleiro, ``Quantum strategies for simple two-player xor games,'' {\em
  Quantum Inf Process 19, 229 (2020)}, 2020.

\bibitem{galvao}
E.~F. Galvao, ``Foundations of quantum theory and quantum information
  applications,'' 2002.

\bibitem{Gallego2010}
R.~Gallego, N.~Brunner, C.~Hadley, and A.~Ac\'{\i}n, ``Device-independent tests
  of classical and quantum dimensions,'' {\em Phys. Rev. Lett.}, vol.~105,
  p.~230501, Nov 2010.

\bibitem{spekkens_contextuality_2005}
R.~W. Spekkens, ``Contextuality for preparations, transformations, and unsharp
  measurements,'' {\em Physical Review A}, vol.~71, p.~052108, May 2005.
\newblock arXiv: quant-ph/0406166.

\bibitem{coecke2018picturing}
B.~Coecke and A.~Kissinger, ``Picturing quantum processes,'' in {\em
  International Conference on Theory and Application of Diagrams}, pp.~28--31,
  Springer, 2018.

\bibitem{Tavakoli2022informationally}
A.~Tavakoli, E.~Zambrini~Cruzeiro, E.~Woodhead, and S.~Pironio,
  ``Informationally restricted correlations: a general framework for classical
  and quantum systems,'' {\em {Quantum}}, vol.~6, p.~620, Jan. 2022.

\bibitem{Wiesner1983}
S.~Wiesner, ``Conjugate coding,'' {\em SIGACT News}, vol.~15, p.~78–88, jan
  1983.

\bibitem{Pegg}
D.~T. Pegg, ``Causality in quantum mechanics,'' {\em Physics Letters A},
  vol.~349, no.~6, pp.~411--414, 2006.

\bibitem{wilde_2017}
M.~M. Wilde, {\em Quantum Information Theory}.
\newblock Cambridge University Press, 2017.

\bibitem{mansfield2017consequences}
S.~Mansfield, ``Consequences and applications of the completeness of hardy's
  nonlocality,'' {\em Physical Review A}, vol.~95, no.~2, p.~022122, 2017.

\bibitem{JAMIOLKOWSKI}
A.~Jamiołkowski, ``Linear transformations which preserve trace and positive
  semidefiniteness of operators,'' {\em Reports on Mathematical Physics},
  vol.~3, no.~4, pp.~275--278, 1972.

\bibitem{CHOI}
M.-D. Choi, ``Completely positive linear maps on complex matrices,'' {\em
  Linear Algebra and its Applications}, vol.~10, no.~3, pp.~285--290, 1975.

\bibitem{coiteux-roy_advantage_2020}
X.~Coiteux-Roy and S.~Wolf, ``On the {Advantage} of {Irreversible} {Processes}
  in {Single}-{System} {Games},'' {\em arXiv:2001.04713 [quant-ph]}, Jan. 2020.
\newblock arXiv: 2001.04713.

\bibitem{B14}
N.~Brunner, D.~Cavalcanti, S.~Pironio, V.~Scarani, and S.~Wehner, ``Bell
  nonlocality,'' {\em Rev. Mod. Phys.}, vol.~86, pp.~419--478, Apr 2014.

\bibitem{Chaturvedi2017}
A.~Chaturvedi, M.~Pawlowski, and K.~Horodecki, ``Random access codes and
  nonlocal resources,'' {\em Phys. Rev. A}, vol.~96, p.~022125, Aug 2017.

\bibitem{Ambainis2019}
A.~Ambainis, M.~Banik, A.~Chaturvedi, D.~Kravchenko, and A.~Rai, ``Parity
  oblivious d-level random access codes and class of noncontextuality
  inequalities,'' {\em Quantum Information Processing}, vol.~18, no.~4, p.~111,
  2019.

\bibitem{pusey2018robust}
M.~F. Pusey, ``Robust preparation noncontextuality inequalities in the simplest
  scenario,'' {\em Phys. Rev. A}, vol.~98, p.~022112, Aug 2018.

\bibitem{Schmid2021unscrambling}
D.~Schmid, J.~H. Selby, and R.~W. Spekkens, ``Unscrambling the omelette of
  causation and inference: The framework of causal-inferential theories,'' {\em
  arXiv:2009.03297}, 2021.

\bibitem{CataniLeifer2020}
L.~Catani and M.~Leifer, ``A mathematical framework for operational fine
  tunings,'' {\em {Quantum}}, vol.~7, p.~948, Mar. 2023.

\bibitem{Budroni2013}
C.~Budroni, T.~Moroder, M.~Kleinmann, and O.~G\"uhne, ``Bounding temporal
  quantum correlations,'' {\em Phys. Rev. Lett.}, vol.~111, p.~020403, Jul
  2013.

\bibitem{Budroni2014}
C.~Budroni and C.~Emary, ``Temporal quantum correlations and leggett-garg
  inequalities in multilevel systems,'' {\em Phys. Rev. Lett.}, vol.~113,
  p.~050401, Jul 2014.

\bibitem{Markiewicz2014}
M.~Markiewicz, A.~Przysi\ifmmode \mbox{\k{e}}\else
  \k{e}\fi{}\ifmmode~\dot{z}\else \.{z}\fi{}na, S.~Brierley, and T.~Paterek,
  ``Genuinely multipoint temporal quantum correlations and universal
  measurement-based quantum computing,'' {\em Phys. Rev. A}, vol.~89,
  p.~062319, Jun 2014.

\bibitem{Le2017}
T.~Le, F.~A. Pollock, T.~Paterek, M.~Paternostro, and K.~Modi, ``Divisible
  quantum dynamics satisfies temporal tsirelson's bound,'' {\em Journal of
  Physics A: Mathematical and Theoretical}, vol.~50, p.~055302, jan 2017.

\bibitem{Hoffman2018}
J.~Hoffmann, C.~Spee, O.~Gühne, and C.~Budroni, ``Structure of temporal
  correlations of a qubit,'' {\em New Journal of Physics}, vol.~20, p.~102001,
  oct 2018.

\bibitem{Budroni2019}
C.~Budroni, G.~Fagundes, and M.~Kleinmann, ``Memory cost of temporal
  correlations,'' {\em New Journal of Physics}, vol.~21, p.~093018, sep 2019.

\bibitem{Spee2020}
C.~Spee, H.~Siebeneich, T.~F. Gloger, P.~Kaufmann, M.~Johanning, M.~Kleinmann,
  C.~Wunderlich, and O.~G{\"u}hne, ``Genuine temporal correlations can certify
  the quantum dimension,'' {\em New Journal of Physics}, vol.~22, p.~023028,
  feb 2020.

\bibitem{Spee2020simulating}
C.~Spee, C.~Budroni, and O.~G{\"u}hne, ``Simulating extremal temporal
  correlations,'' {\em New Journal of Physics}, vol.~22, p.~103037, oct 2020.

\bibitem{Vieira2022}
L.~B. Vieira and C.~Budroni, ``Temporal correlations in the simplest
  measurement sequences,'' {\em {Quantum}}, vol.~6, p.~623, Jan. 2022.

\bibitem{Mao2022}
Y.~Mao, C.~Spee, Z.-P. Xu, and O.~G\"uhne, ``Structure of dimension-bounded
  temporal correlations,'' {\em Phys. Rev. A}, vol.~105, p.~L020201, Feb 2022.

\bibitem{LeggettGarg1985}
A.~J. Leggett and A.~Garg, ``Quantum mechanics versus macroscopic realism: Is
  the flux there when nobody looks?,'' {\em Phys. Rev. Lett.}, vol.~54,
  pp.~857--860, Mar 1985.

\bibitem{Lapiedra2006}
R.~Lapiedra, ``Joint reality and bell inequalities for consecutive
  measurements,'' {\em Europhysics Letters ({EPL})}, vol.~75, pp.~202--208, jul
  2006.

\bibitem{Avis2010}
D.~Avis, P.~Hayden, and M.~M. Wilde, ``Leggett-garg inequalities and the
  geometry of the cut polytope,'' {\em Phys. Rev. A}, vol.~82, p.~030102, Sep
  2010.

\bibitem{Brierley}
S.~Brierley, A.~Kosowski, M.~Markiewicz, T.~Paterek, and A.~Przysi\ifmmode
  \mbox{\k{e}}\else \k{e}\fi{}\ifmmode~\dot{z}\else \.{z}\fi{}na,
  ``Nonclassicality of temporal correlations,'' {\em Phys. Rev. Lett.},
  vol.~115, p.~120404, Sep 2015.

\bibitem{TonerBacon}
B.~F. Toner and D.~Bacon, ``Communication cost of simulating bell
  correlations,'' {\em Phys. Rev. Lett.}, vol.~91, p.~187904, Oct 2003.

\bibitem{IBM}
D.~Maslov, J.-S. Kim, S.~Bravyi, T.~J. Yoder, and S.~Sheldon, ``Quantum
  advantage for computations with limited space,'' {\em preprint}, 2020.

\bibitem{Brunner2008}
N.~Brunner, S.~Pironio, A.~Acin, N.~Gisin, A.~A. M\'ethot, and V.~Scarani,
  ``Testing the dimension of hilbert spaces,'' {\em Phys. Rev. Lett.},
  vol.~100, p.~210503, May 2008.

\bibitem{Brunner2013}
N.~Brunner, M.~Navascu\'es, and T.~V\'ertesi, ``Dimension witnesses and quantum
  state discrimination,'' {\em Phys. Rev. Lett.}, vol.~110, p.~150501, Apr
  2013.

\bibitem{Li2018}
H.-W. Li, Y.-S. Zhang, X.-B. An, Z.-F. Han, and G.-C. Guo, ``Three-observer
  classical dimension witness violation with weak measurement,'' {\em
  Communications Physics}, vol.~1, no.~1, p.~10, 2018.

\bibitem{Sohbi2021}
A.~Sohbi, D.~Markham, J.~Kim, and M.~T. Quintino, ``Certifying dimension of
  quantum systems by sequential projective measurements,'' {\em {Quantum}},
  vol.~5, p.~472, June 2021.

\bibitem{Diviánszky2023}
P.~Divi{\'a}nszky, I.~M{\'a}rton, E.~Bene, and T.~V{\'e}rtesi, ``Certification
  of qubits in the prepare-and-measure scenario with large input alphabet and
  connections with the grothendieck constant,'' {\em Scientific Reports},
  vol.~13, p.~13200, Aug 2023.

\bibitem{tavakoli_predicate}
C.-J. Huang, G.-Y. Xiang, Y.~Guo, K.-D. Wu, B.-H. Liu, C.-F. Li, G.-C. Guo, and
  A.~Tavakoli, ``Nonlocality, steering, and quantum state tomography in a
  single experiment,'' {\em Phys. Rev. Lett.}, vol.~127, p.~020401, Jul 2021.

\bibitem{watrous_2018}
J.~Watrous, {\em The Theory of Quantum Information}.
\newblock Cambridge University Press, 2018.

\bibitem{Harrigan2010}
N.~Harrigan and R.~W. Spekkens, ``Einstein, incompleteness, and the epistemic
  view of quantum states,'' {\em Foundations of Physics}, vol.~40,
  pp.~125--157, Feb 2010.

\bibitem{Bell75}
J.~S. Bell, ``The theory of local beables,'' in {\em Speakable and Unspeakable
  in Quantum Mechanics}, pp.~52--68, Cambridge University Press, 2004.

\end{thebibliography}

\section*{Appendix}
\appendix

 \section{Reset-induced gap activation in XOR* games}
\label{appendixA}

In this appendix we introduce an example of XOR* game whose quantum-classical gap can be activated by allowing irreversibility. First, before introducing the example, we state two salient features of the role of irreversibility in XOR* games.

\begin{remark}
For XOR* games restricted to bi-dimensional resources:
\begin{enumerate}
    \item   The only useful irreversible transformation is the Reset gate (as defined below).
    \item  The Reset gate is only useful when applied by the second player (Bob).
\end{enumerate}
\end{remark}

 \begin{definition}[Reset gate]
 \label{ResetGate}
The Reset gate $\mathrm{R}$ is represented by  the completely positive trace-preserving map, $\mathrm{R}(\rho) = \ketbra{0}$ for all possible qubit states \(\rho\).
 \end{definition}
The Reset gate can be interpreted as a particular case of a \textit{replacement channel} \cite{watrous_2018}, where the replaced state is always the \(\ketbra{0}\) pure state. Such channels represent the action of discarding the original state and replacing it with some other state. 

For classical strategies it is well known that deterministic functions between inputs and outputs suffice to achieve optimality \cite{cleve_consequences_2010}. That is, neither shared randomness nor local randomness are useful. Thus, we can only consider the four possible deterministic maps for bi-dimensional systems, that are: identity, NOT, reset to zero and reset to one. Among these, the only irreversible ones are the reset to zero or the reset to one.  The fact that implementing the Reset gate can only be useful for the second player is also straightforward to show. Since the first player receives the resource always initialized in a fixed state (classically, either zero or one), the effect of implementing a Reset gate is tantamount to implementing a reversible gate that, for the initialized fixed state input, produces the same output, e.g, for a state initialized at zero a reset to zero gives the same output as implementing an identity, and for a state initialized at one the reset to zero gives the same output as implementing a NOT gate. Thus, the Reset gate, being a constant function, can only be leveraged by the second player to discard the actions of the first player. 

Regarding the case of employing irreversibility in quantum strategies, one might wonder if there may exist extra operations (like a ``quantum Reset gate''), other than the previously defined Reset gate, that could be even  more useful than the Reset gate. If so, such operations should be purity preserving on the grounds that classical randomness is not beneficial, just like in the classical case. Then, such an hypothethical ``quantum Reset gate" would act as  $\mathrm{R}_q(\rho) = \sigma$ for all \(\rho\), where \(\sigma\) has the same degree of mixedness as \(\rho\). In particular, in the pure state case,  $\mathrm{R}_q(\ketbra{\psi}) = \ketbra{\phi}$. This, in turn, can be re-written as $\mathrm{R}_q(\ketbra{\psi}) = U_{\phi}\ketbra{0}U^{\dagger}_{\phi} = U_{\phi}\mathrm{R}(\ketbra{\psi})U^{\dagger}_{\phi} $, where $\ket{\phi}=U_{\phi}\ket{0},$ thus showing that such a ``quantum Reset gate" is nothing more than a composition of a regular Reset gate to $\ket{0}$ and a unitary evolution, for the appropriate unitary \(U_{\phi}\). As such, the Reset gate as defined in definition \ref{ResetGate} is the only type of irreversible operation that is useful to consider for XOR* games restricted to (classical or quantum) bi-dimensional resources.

\begin{example}
[Reset-induced gap activation (RA) XOR* game]
    Two-player XOR* game with inputs \(s,t\in \{0,1,2,3\}\) where  \( \forall_{s,t}  \; p(s,t)={1}/{16}\), and 
\( f(s,t):={\delta(s\cdot t, 0)} \oplus \delta(s\cdot t, 3)\). 
\end{example}

The previous example is by construction a legitimate XOR* game. Nevertheless,  under the assumption of reversibility and of two-dimensional systems the game's classical and quantum values are numerically computed to be identical, namely $\omega_q(\textup{RA-XOR}^*|^{Rev}_{2D})= \omega_c(\textup{RA-XOR}^*|^{Rev}_{2D}) = 13/16 = 0.8125$. This game represents then an interesting example where the quantum-classical gap is null, without being trivial in the sense that both classical and quantum strategies allow a winning probability of one. From the previous results on the role of the Reset gate, we are motivated to pursue a strategy where Bob uses such gates. Ideally, the Reset gate is implemented for input configurations whose evaluation by the task function only depends on Bob's input. Precisely, from the function of the previous example: \(\forall_s \; 
f(s,0):={\delta(0, 0)} \oplus \delta(0, 3) = 1\). This means that a strategy where Bob for input 0 resets to 1 always wins for all of Alice's inputs. The winning probability of an irreversible classical strategy implementing the Reset gate in this way can be easily numerically computed to be optimal and yields the classical value \(\omega_c(\textup{RA-XOR}^*|^{Irr}_{2D})=14/16 = 0.875\), which is higher than both the classical and quantum value in the reversible setting. 

Given the considerations above, it is now straightforward to construct an irreversible quantum strategy that outperforms the optimal classical irreversible strategy. It is enough to implement a Reset gate in the same way as before, where Bob for input 0 always resets to \(\ket{1}\), and to optimise for quantum reversible strategies in the remaining input configurations. This yields a winning probability of \(W_q(\textup{RA-XOR}^*|^{Irr}_{2D})\simeq 0.885\) \footnote{Numerical findings suggest that this winning
probability is the optimal quantum value in the irreversible setting. Notice that finding the optimal quantum value is not needed for showing Reset-induced gap activation.}, which means that the quantum-classical gap goes from zero, in the reversible setting, to a strictly positive value in the irreversible setting,  \(\Omega(\textup{RA-XOR}^*|^{Irr}_{2D})\geq 0.01 > 0 = \Omega(\textup{RA-XOR}^*|^{Rev}_{2D})\). This then shows an instance of quantum-classical gap  creation by employing  irreversibility in the form of the Reset gates, i.e. \textit{Reset-induced gap activation}.

 \section{Preparation contextuality in XOR* games}
\label{appendixB}
In this appendix we first provide the definition of preparation noncontextuality \cite{spekkens_contextuality_2005} and then prove how preparation noncontextuality in the prepare and measure scenario associated with any wing of the Bell scenario implies locality therein (this is an already known fact -- see for example section V. and appendix A of \cite{Yadavalli2020}
). The latter is equivalent to show that nonlocality in Bell scenario -- in our case defining an XOR game -- implies preparation contextuality in any wing of such scenario -- here defining an XOR* game.

The notion of preparation noncontextuality is defined in the framework of ontological models \cite{Harrigan2010}. The role of an ontological model for a given physical theory, like quantum theory, is to reproduce and provide an explanation of the statistics predicted by the theory. It does so by associating the physical state of the system under consideration at a given time -- the ontic state -- to a point $\lambda$ in a measurable set $\Lambda,$ and the experimental procedures -- classified in preparations, transformations and measurements -- to probability distributions on the ontic space $\Lambda.$ The ontological model reproduces the statistics of the theory by making use of the classical law of total probability. 
We are here interested in experimental procedures that correspond to lists of instructions to prepare a system in a given quantum state. Such a preparation procedure $P$ of a quantum state $\rho$ is represented by a probability distribution $\mu_P(\lambda)$ over the ontic space, $\mu_P:\Lambda\rightarrow \mathbb{R}$ such that $\int \mu_P(\lambda)d\lambda=1$ and $\mu_{P}(\lambda)\ge0 \;\; \forall \lambda\in\Lambda.$ 
We say that two preparation procedures, $P$ and $P'$, in a prepare and measure scenario are \textit{operationally equivalent}, $P\simeq P',$ if they provide the same statistics for any measurement, \textit{i.e.}, $p(m|P,M)=p(m|P',M)$ for every measurement $M$ and any outcome $m.$ A simple example of operationally equivalent preparations in quantum theory is given by any two decompositions of the completely mixed state $\rho=\mathbb{I}/2$ of a qubit, \textit{e.g.}, $P=1/2\ket{0}\bra{0} + 1/2\ket{1}\bra{1}$ and $P'=1/2\ket{+}\bra{+} + 1/2\ket{-}\bra{-}.$

We here focus on prepare and measure scenarios. An ontological model of a prepare and measure scenario in quantum theory is \emph{preparation noncontextual} if \begin{equation}\label{TransfNC}\mu_{P}(\lambda)=\mu_{P'}(\lambda) \;\;\; \forall \;P\simeq P'.\end{equation}

We can now show, following \cite{Yadavalli2020}, how preparation noncontextuality in the prepare and measure scenario on Bob's wing -- let us choose Bob's wing without loss of generality -- implies locality in the bipartite Bell scenario  \cite{Bell64,Bell75}. Let us stress that this ultimately means that a proof of nonlocality in an XOR game, cast in Bell scenario, implies, by virtue of our mapping, preparation contextuality in an XOR* game treated as a prepare and measure scenario.

We start by focusing on the operational equivalences induced by Alice on Bob's side. Whenever she performs a measurement, that we denote with the POVM $\{E_a^{(s)}\}_a$, and obtains the outcome associated with $E_a^{(s)}$ she steers, with probability $p(a|s)$, Bob's system to the state $\rho_{a|s}=\textrm{Tr}_A[(E_a^{(s)}\otimes \mathbb{I})\rho_{AB}]/p(a|s)$, where $\rho_{AB}$ denotes the entangled state shared between Alice and Bob and $p(a|s)=\textrm{Tr}_{AB}[(E_a^{(s)}\otimes \mathbb{I})\rho_{AB}]$. Because of no-signalling, Bob can never infer the measurement setting of Alice $s$, and therefore it holds that the different ensemble preparations labelled by $s$ will give the same state $\rho_B=\sum_a p(a|s) \rho_{a|s}$ for every $s$. These are the operational equivalences that we consider.
Let us now apply preparation noncontextuality to the operationally equivalent preparations of $\rho_B.$ It reads as 
\begin{equation*}\sum_a p(a|s)p(\lambda|s,a)=p(\lambda) \;\; \forall s.\end{equation*}
From the basic law for writing joint probabilities in terms of conditional probabilities we notice that $p(a|s)p(\lambda|s,a)=p(a,\lambda|s)$ and, in turn, $p(a,\lambda|s)=p(a|\lambda,s)p(\lambda|s).$ We can then write \begin{align*}\sum_a p(a|s)p(\lambda|s,a) & =\sum_a p(a|\lambda,s)p(\lambda|s)\\&=p(\lambda|s) \sum_a p(a|\lambda,s)=p(\lambda|s),
\end{align*}
and so \begin{equation} \label{PNCproof}p(\lambda|s)=p(\lambda). \end{equation}
 
At this point, let us consider the joint conditional probability in Bell scenario $p(a,b|s,t).$ We want to show that it can be written as $p(a,b|s,t)=\sum_{\lambda}p(a|s,\lambda)p(b|t,\lambda)p(\lambda)$, as this defines the set of local correlations. 
First, we use the basic law for writing joint probabilities as conditional probabilities, \begin{equation*}p(a,b|s,t)=p(a|s,t)p(b|s,t,a).\end{equation*} We recall that no-signalling holds, and so $p(a|s,t)=p(a|s)$, thus yielding $p(a,b|s,t)=p(a|s)p(b|s,t,a)$.

Let us focus on $p(b|s,t,a)$ and write it introducing $\lambda$ as \begin{align*}p(b|s,t,a)&=\sum_{\lambda}p(b,\lambda|s,t,a)\\&=\sum_{\lambda}p(b,|\lambda,s,t,a)p(\lambda|s,t,a).\end{align*} 
We now impose two assumptions characterizing the ontological model framework \cite{CataniLeifer2020}. The first is known as measurement independence (usually justified as an assumption of no-retrocausality)  and, in the prepare and measure scenario on Bob's wing where the measurement settings are denoted with $t$, it reads as $p(\lambda|s,a,t)=p(\lambda|s,a).$ 
The second is known as $\lambda-$mediation (\textit{i.e.}, the ontic state $\lambda$ mediates any correlation between preparation and measurement) and, in the prepare and measure scenario on Bob's wing where the preparations are associated with $(s,a)$ and the measurement settings with $t$, it reads as $p(b,|\lambda,s,t,a)=p(b|\lambda,t).$ 
Therefore we obtain \begin{equation*}p(b|s,t,a)=\sum_{\lambda}p(b,|\lambda,t)p(\lambda|s,a),\end{equation*}
and so
\begin{equation*}p(a,b|s,t)=\sum_{\lambda}p(a|s)p(b,|\lambda,t)p(\lambda|s,a).\end{equation*}
By exploiting again the fact proven above that $p(a|s)p(\lambda|s,a)=p(a|\lambda,s)p(\lambda|s)$ we have that 
\begin{equation*}p(a,b|s,t)=\sum_{\lambda}p(b,|\lambda,t)p(a|\lambda,s)p(\lambda|s),\end{equation*} and, by using preparation noncontextuality as in Eq.~\eqref{PNCproof}, we end the proof, \begin{equation*}p(a,b|s,t)=\sum_{\lambda}p(a|s,\lambda)p(b|t,\lambda)p(\lambda).\end{equation*}

Let us conclude by stressing how the proof just provided does not connect an XOR game to the corresponding dual XOR* game, as the inputs considered in the preparation stage of the prepare and measure scenario associated with the XOR* game are $(s,a)$ and not just $s$. As a consequence, this proof does not apply to the CHSH* game, that indeed cannot involve a proof of preparation contextuality due to its too low cardinality of the inputs.

\end{document}